\newtheorem{pro}{Proposition}
\newtheorem{coro}{Corollary}
\newtheorem{definition}{Definition}
\newtheorem{assumption}{Assumption}
\long\def\symbolfootnote[#1]#2{\begingroup
\def\thefootnote{\fnsymbol{footnote}}
\footnote[#1]{#2}\endgroup}
\begin{document}

\title{Secure Ranging with IEEE 802.15.4z HRP UWB}

\author{
\IEEEauthorblockN{
Xiliang Luo, Cem Kalkanli, Hao Zhou, Pengcheng Zhan, and Moche Cohen
}
\IEEEauthorblockA{
Apple, California, USA \\
}
}

\maketitle

\begin{abstract}
Secure ranging refers to the capability of upper-bounding the actual
physical distance between two devices with reliability. This is essential
in a variety of applications, including to unlock physical systems.
In this work, we will look at secure ranging in the context of
ultra-wideband impulse radio (UWB-IR) as specified in IEEE 802.15.4z
(a.k.a. 4z). In particular, an encrypted waveform, i.e. the scrambled
timestamp sequence (STS), is defined in the high rate pulse repetition
frequency (HRP) mode of operation in 4z for secure ranging. 
This work demonstrates the security analysis of 4z HRP when implemented
with an adequate receiver design and shows the STS waveform can enable
secure ranging.
We first review the STS receivers adopted in previous studies and analyze
their security vulnerabilities.
Then we present a reference STS receiver and prove that secure ranging
can be achieved by employing the STS waveform in 4z HRP.
The performance bounds of the reference secure STS receiver are also
characterized. Numerical experiments corroborate the analyses and
demonstrate the security of the reference STS receiver.
\end{abstract}

\begin{IEEEkeywords}
Secure ranging, impulse radio, UWB, IEEE, 802.15.4z, STS.
\end{IEEEkeywords}

\section{Introduction}\label{SecIntr}

Ultra-wideband (UWB) impulse radio (IR) uses a series of short pulses to
transmit data. Each pulse in UWB-IR occupies a very small portion of
the time in the order of nanoseconds. UWB-IR is ideal for short-range
communication between devices, precise ranging and location tracking,
and nearby environment sensing \cite{Molish2006Book}.

There have been many studies on how to leverage UWB-IR for ranging
\cite{Sahinoglu2008Book}.
Most of the previous work focused on the theoretical limitation of
ranging accuracy and sought for optimal ranging solutions with UWB-IR.
It was presumed that the receiver could trust the received signal in
the sense that there were no adversarial attackers trying to manipulate
the ranging waveform. This category of ranging is also referred to as
the link-budget optimized ranging \cite{FiRaWP}. In the meantime,
there are many applications which require the integrity of physical
ranging to enable various secure transactions. 
To this end, it is required that the
ranging result, i.e., the measured and reported distance, always
provides an upper bound on the actual physical distance between two
ranging devices, 
a.k.a. distance-bounding \cite{1993db_Principle, 2018Survey_DB}, even when
adversarial attackers are manipulating the ranging waveforms exchanged
between the ranging devices. This type of ranging is referred to as
secure ranging \cite{FiRaWP}, which is the focus of this paper. One example
use case of secure ranging is illustrated in Fig. \ref{fig:secRanging}.

\begin{figure}[t]
	\centering
	\epsfig{file=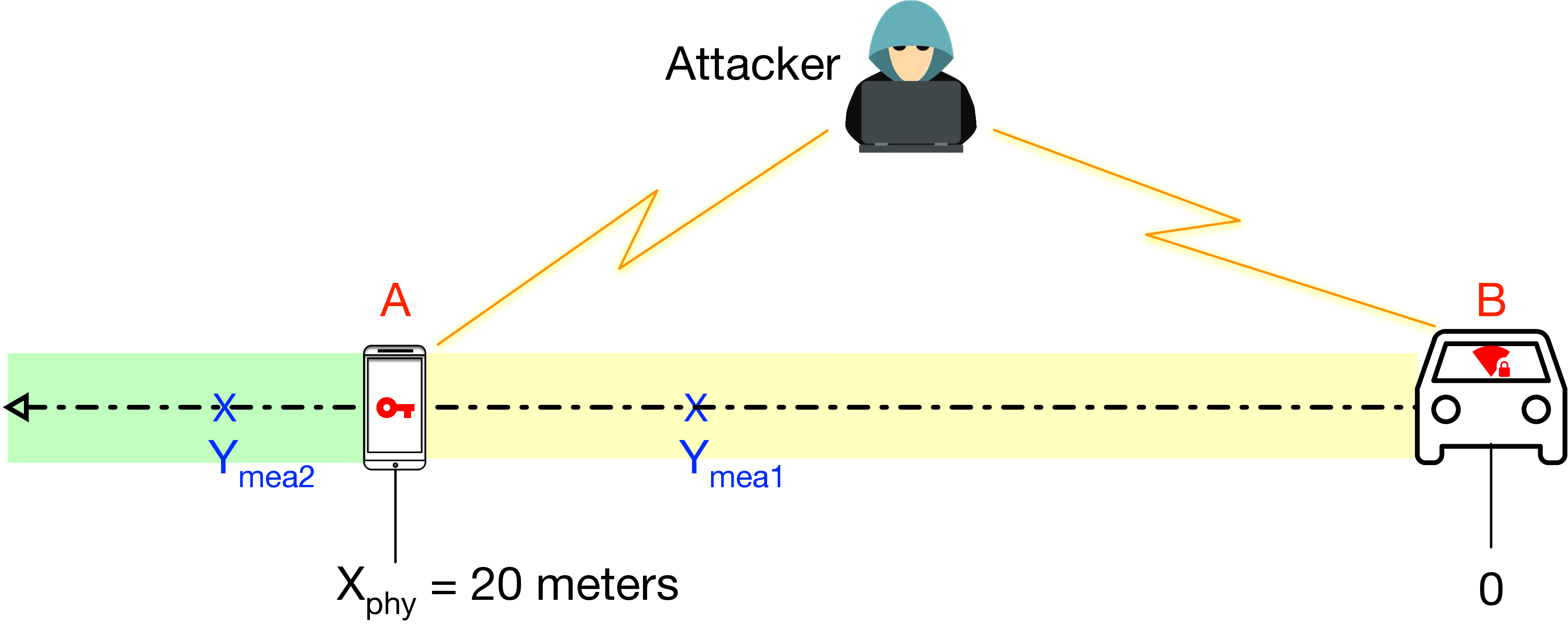, width=0.95\linewidth}
	\caption{A secure ranging use case. Whenever the measured distance from
		the phone-key to the car lies within a prescribed interval, e.g.
		less than $5$ meters, the car will open the door according to the
		protocol. For this protocol to be safe, the phone-key and the car
		need to ensure that the measured and reported distance is always
		lower bounded by the actual physical distance even in the
		presence of an attacker. In other words, the measured
		distance is always an upper bound on the true physical distance
		in the case of secure ranging.
		Distance measurement $Y_{\sf mea2}$
		is a valid one since $Y_{\sf mea2}>X_{\sf phy}$ and thus causes
		no security concerns. On the other hand, an attacker will
		try to reduce the measured distance to $Y_{\sf mea1}$ such
		that $Y_{\sf mea1}<X_{\sf phy}$.}
	\label{fig:secRanging}
\end{figure}

\subsection{Prior Work on Secure Ranging}

The distance-bounding principle was initially proposed by S. Brands and D. Chaum
in \cite{1993db_Principle}. Various distance-bounding protocols were also discussed
in \cite{1993db_Principle} while the physical implementations were left as open
problems. UWB-IR has been recognized as an ideal candidate to realize secure
ranging with distance-bounding since it allows for precise time-of-flight
(ToF) measurements thanks to the short impulses in the order of nanoseconds
\cite{2005RFID_DB}. However, it has been challenging to have a
practical UWB-IR system design with provable secure ranging capability. In
\cite{2010Flury_AttackIR}, the authors revealed the vulnerability of the
ranging solutions based on the standard IEEE 802.15.4a to distance-decreasing
attacks. In particular, the early-decision late-commit (EDLC) attack was
employed in \cite{2010Flury_AttackIR}. Another type of attack called 
``Cicada'' attack was proposed in \cite{2010CicadaAttk} to reduce the distance
measurement in the receiver designed for 802.15.4a and non-line-of-sight (NLoS)
conditions. More studies of physical layer distance-decreasing attacks to 802.15.4a
ranging were presented in \cite{2011TWC_DistBoundWith4a}.
Even though 802.15.4a was not ready for secure ranging, the ToF-based distance
measurement using UWB-IR was concluded to be the way forward in \cite{Capkun2017}.

Amendment 802.15.4z \cite{802.15.4z} specified encrypted ranging waveforms to
increase the integrity and accuracy of the ranging measurements with respect
to 802.15.4a. Ranging security with the low rate pulse repetition frequency
(LRP) mode of operation in 4z was studied in \cite{Capkun2019}. The pulse
repetition frequency (PRF) in 4z LRP mode is less than $4$ MHz \cite{802.15.4z}.
It was shown that the pulse-reordering modulation scheme was able to secure
distance measurement while being robust against all physical-layer distance
shortening attacks \cite{Capkun2019}. Distance-commitment on data is another
scheme to achieve distance-bounding with LRP mode of operation by only capturing
the energy during the short active radio-frequency (RF) period within each symbol
of the data corresponding to the first arriving channel path \cite{802.15.4z}.
In contrast, ranging security is still an open problem in the case
of high rate pulse repetition frequency (HRP) mode of operation in 4z with
PRF greater than $62$ MHz. Due to the higher PRF, the inter-pulse spacing
is less than $16$ nanoseconds in HRP mode, which is less than the delay
spread of typical UWB channels \cite{2009MolischChannel} and incurs severe
inter-pulse interference. On the other hand, the inter-pulse spacing in
LRP mode is more than $250$ nanoseconds, which avoids inter-pulse
interference and thus allows simple schemes like the distance-commitment
on data to achieve ranging security. 
In \cite{Capkun2021_HRP}, the authors provided the first open analysis about
the ranging security with the 4z HRP mode. It was claimed that the HRP mode
of operation was hard to be configured both secure and of satisfactory detection
performance in \cite{Capkun2021_HRP}. In \cite{2022GhostPeak}, the authors
presented over-the-air attacks on several 4z HRP based distance measurement
systems and raised concerns about the usage of 4z HRP in security-critical
applications.

\subsection{Main Contributions}

The HRP mode in IEEE 802.15.4z has been widely deployed in different
solutions \cite{2022UWBStdOverview} due to the hardware implementation
advantages associated with the higher PRF. Particularly, since the
inter-pulse spacing is less than $16$ nanoseconds, a large number
of UWB pulses can be transmitted within each millisecond. The
requirements on the power amplifier (PA) are thus relaxed.
Additionally, the HRP mode is the only one certified by the UWB
certification program in Fine Ranging Consortium (FiRa) \cite{FiRaCert}.
However, there have been no published ranging
solutions based on 4z HRP establishing the security. In this paper, we
demonstrate that secure ranging can be achieved with the STS waveform
in 4z HRP mode. Our primary contributions are as follows.

\begin{enumerate}
\item 
We characterize all the attack waveforms that are theoretically feasible
and provide the rigorous definition of secure ranging in the context of
UWB-IR;
	
\item 
We conduct a review of the previously studied STS receiver for 4z HRP
ranging and elucidate why it cannot provide ranging security as noted
in \cite{Capkun2021_HRP,2022GhostPeak};

\item 
We present a reference STS receiver design for 4z HRP ranging and prove
its security in ensuring the distance-bounding principle. Moreover, the
performance bounds of the reference secure STS receiver are also
characterized.
\end{enumerate}

We are witnessing a proliferation of use cases that involve secure ranging.
The results in this paper will reinforce the foundation.

\subsection{Paper Outline}

Section \ref{SecBackground} provides more details about the HRP mode
in IEEE 802.15.4z.
Section \ref{SecSysModel} describes the system model and defines the
meanings of ``secure ranging'' and ``feasible attack'' in the current
context.
Section \ref{SecCIRThr} reviews the existing STS receiver design which
was widely assumed in previous studies and highlights the effectiveness
of one adaptive attack scheme.
Section \ref{SecSecRx} presents a reference ranging receiver design and
proves the security together with asymptotic optimality in detection.
Performance evaluations are provided in Section \ref{SecSim}.
Finally, concluding remarks are presented in Section \ref{SecConc}.

To facilitate reading, Table \ref{tab:SymbPaper} and Table
\ref{tab:Acronyms} in Appendix \ref{AppendixNotations} list all the
key notations, symbols, and acronyms adopted in this paper.

\section{HRP Mode in IEEE 802.15.4z} \label{SecBackground}

Appendix \ref{AppendixUWBStandard} provides a brief introduction to the
standardization of UWB technology since it was approved by FCC in 2002
\cite{FCC}. The amendment 4z is the focus of the current paper. Note both
LRP and HRP modes are defined in IEEE 802.15.4z. The PRF in 4z LRP
mode is less than $4$ MHz and the inter-pulse spacing is greater than
$250$ nanoseconds. Yet the PRF is greater than $62$ MHz
in HRP mode of operation. Accordingly, the inter-pulse spacing is less
than $16$ nanoseconds. As a result, the UWB pulses in HRP will experience
inter-pulse interference (IPI) after propagating through a multi-path
channel since the inter-pulse spacing is typically less than the delay
spread of the channel \cite{2009MolischChannel}.

\begin{figure}[t]
\centering
\epsfig{file=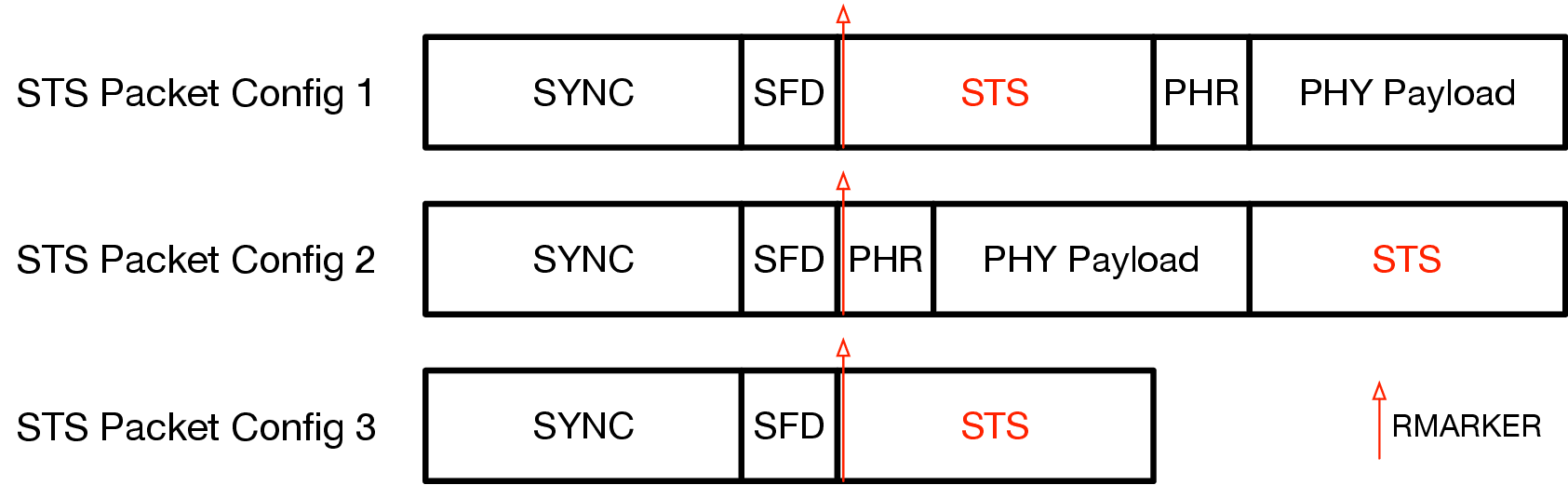, width=0.95\linewidth}
\caption{PHY packet formats for ranging integrity in 4z HRP \cite{802.15.4z}.}
\label{fig:stsPktFormats}
\end{figure}

The PHY packet formats specified in 4z HRP for ranging integrity are
illustrated in Fig. \ref{fig:stsPktFormats}. The designs and functionalities
of individual portions in a PHY packet are as follows.
\begin{itemize}
\item SYNC (Synchronization): used for initial packet acquisition, timing
and frequency synchronization, and channel estimation. In 4z HRP, the SYNC
portion includes a repetition of one ternary preamble code that exhibits an
ideal periodic auto-correlation, i.e. the delta function;
 
\item SFD (Start-of-Frame Delimiter): used to signal the start of a frame
and the end of the SYNC. The SFD in 4z HRP modulates the same ternary code
as in SYNC;

\item PHR (PHY Header): used for signaling of the PHY payload length;

\item PHY Payload: used to convey the data bits from the MAC layer; 

\item STS (Scrambled Timestamp Sequence): used to enable ranging integrity.
The STS portion consists of a sequence of uniformly spaced UWB pulses with
pseudo-random polarities that are generated by applying the AES-$128$
engine in counter mode \cite{AESstd, DRBGstd};

\item RMARKER (Ranging Marker): defined as the time when the peak of the
(hypothetical) pulse associated with the first chip following the SFD is
at the local antenna. All reported times during ranging are measured relative
to the RMARKER.
\end{itemize}

Throughout the remainder of this paper, our focus will be on the
``STS Packet Config 3'' in Fig. \ref{fig:stsPktFormats} and data
communication is outside the scope of the current paper.

The inter-pulse spacing between the STS pulses in 4z HRP is either
$8$ chips for base PRF (BPRF) operation or $4$ chips for higher PRF
(HPRF) operation with the length of each chip being
$T_{c} = 1/499.2$ $\mu$s ($\approx 2$ ns). This inter-spacing
measured in the unit of chips is also known as the spreading factor
(SF).

In 4z HRP mode, the STS portion could be split into multiple segments.
The length of each STS segment in time can be expressed as
$T_{\sf sts} = K_{\sf sts}\times 512T_{c}$, where
$K_{\sf sts}\in\{16, 32, 64, 128, 256\}$. Throughout the rest of the
paper, we will assume a single STS segment unless stated otherwise.

The higher layer configures the $128$-bit STS key. This key is used by
the AES-$128$ engine to generate the cryptographic STS bit sequence
consisting of $0$'s and $1$'s. In 4z HRP, bit $0$ corresponds to a
positive polarity of $+1$, while bit $1$ corresponds to a negative
polarity of $-1$. Appendix \ref{AppendixSTSDRBG} provides more details
about the STS generation structure.

In the SYNC portion, a known preamble code with ideal auto-correlation
characteristics is utilized. This design facilitates the initial packet
acquisition process and ensures that the system can quickly and accurately
establish synchronization. However, it is known that the SYNC is vulnerable
to distance-decreasing attacks. For example, an adversary, e.g., the attacker
depicted in Fig. \ref{fig:secRanging}, could simply transmit the known
preamble code at an advanced timing. This will be able to confuse the
receiver about the correct arrival time of the ranging packet
\cite{2011TWC_DistBoundWith4a}.

In contrast, the attacker cannot predict the STS bit sequence thanks to
its cryptographic nature. This is the key property that underpins secure
ranging even in the presence of attackers. In Section \ref{SecSecRx}, we
will delve into the specifics regarding how the cryptographic STS can be
employed for secure ranging.

\begin{figure}[t]
	\centering
	\epsfig{file=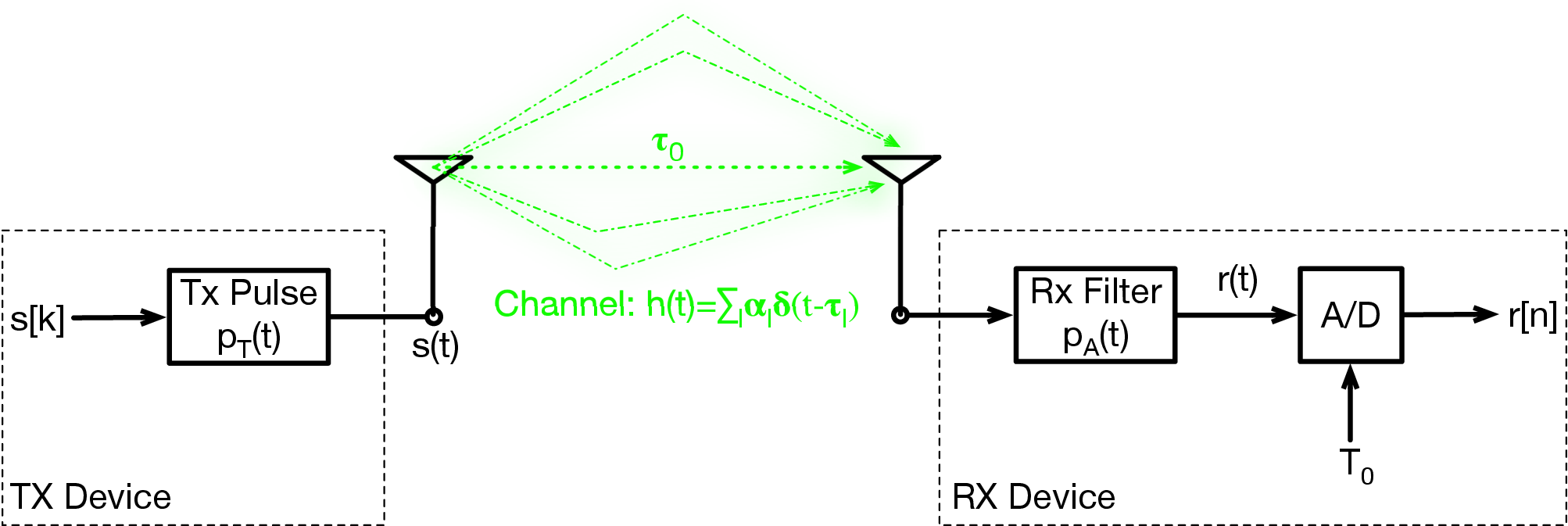, width=0.95\linewidth}
	\caption{System model for STS-based ranging.}
	\label{fig:sysModel}
	\vspace{-3mm}
\end{figure}

\section{System Model} \label{SecSysModel}

In this section, we establish the signal model for the STS-based ranging
in 4z HRP. During one exchange of ranging waveform between ranging devices,
one transmit-device (TXD) sends a ranging packet to one receive-device (RXD).
For example, the phone-key acts as the TXD and the car acts as the RXD in
Fig. \ref{fig:secRanging}. The ranging packet is formed according to 
``STS Packet Config 3'' in Fig. \ref{fig:stsPktFormats}. The RXD receives
the ranging packet through a multi-path wireless channel and determines the
arrival timing of the first path. More mathematical details regarding this
process are given next and Fig. \ref{fig:sysModel} illustrates the overall
system.

\subsection{STS Waveform from TXD}

Let $p_{T}(t)$ denote the baseband transmit pulse satisfying\footnote{
IEEE recommends that the transmitted pulse exhibits minimum precursor
energy \cite{802.15.4z}. In practice, as long as the transmit pulse
exhibits a finite support, we can define the earliest timing of
non-zero energy as $t=0$.} $p_T(t)=0$, $\forall t<0$.
The transmitted STS waveform from the TXD can be expressed as
\begin{equation}\label{txWf}
s(t) = \textstyle\sum_{k=0}^{Q-1} s[k]\cdot p_T(t-k\cdot L\cdot T_{c}),
\end{equation}
where $s[k]\in\{-1, +1\}, k=0,...,Q-1,$ represents the length-$Q$
bipolar STS sequence, $L\in\{4, 8\}$ denotes the spreading factor,
and $T_{c}$ denotes the chip period. Note that
$s(t)$ is the low-pass (a.k.a. baseband) equivalent of the RF signal
emitted from the transmit antenna of the TXD\footnote{Readers can
refer to \cite{DigitalCom} for more information about the equivalence
between the low-pass representation and the corresponding band-pass
RF signal.}.

\subsection{Channel Model}

As commonly used in UWB literature \cite{2009MolischChannel}, we also
adopt the following multi-path channel model in the baseband:
\begin{equation}\label{chModel}
h(t) = \textstyle\sum_{l\ge 0} \alpha_l \delta(t-\tau_l),
\end{equation}
where $\alpha_l$ denotes the complex gain of the $l$-th path with
a delay of $\tau_l>0$ and $\delta(\cdot)$ is the Dirac delta function.
Note $\tau_0$ is the arrival timing of the first path from the TXD
to the RXD.

\subsection{Received Signal at RXD}

After the STS waveform in (\ref{txWf}) propagates through the
channel in (\ref{chModel}), the received baseband signal at
the RXD is
\begin{equation}\label{rxSig1}
r(t) = s(t)\ast h(t)\ast p_A(t) + w(t),
\end{equation}
where the operator ``$\ast$'' stands for the linear convolution
operation, filter $p_A(t)$ captures the overall analog receive filter
impulse response, and $w(t)$ represents the additive noise and
interference. Note that $p_A(t)$ lumps together the effects of
the RF front end and the analog front end at the RXD. Meanwhile,
causality tells us that $p_A(t)=0$, $\forall t<0$.
Let $g(t)$ denote the
aggregate received pulse that combines both the multi-path
channel in (\ref{chModel}) and the overall pulse shape. We have
\begin{eqnarray}
g(t):= p_T(t)\ast h(t) \ast p_A(t) 
     = \textstyle\sum_{l\ge 0} \alpha_l p_R(t-\tau_l),
\label{aggPulse}
\end{eqnarray}
where $p_R(t) := p_T(t)\ast p_A(t)$ denotes the aggregate
pulse without the channel. It can be seen that $p_R(t)=0$,
$\forall t<0$. Note $g(t)$ is referred to as the
channel impulse response (CIR) at the RXD\footnote{The
multi-path model in (\ref{chModel}) is also termed as channel impulse
response sometimes. In this paper, CIR refers to the aggregate pulse
shape $g(t)$.}.
Accordingly, the received signal $r(t)$ in (\ref{rxSig1}) can be
expressed as
\begin{equation}\label{rxSig2}
r(t) = \textstyle\sum_{k=0}^{Q-1}s[k]g(t-k\cdot L\cdot T_c) + w(t).
\end{equation}

The received signal in (\ref{rxSig2}) can be further sampled with
a time period of $T_0 = {T_c}/\Omega$ with $\Omega$ being an
integer greater than or equal to $2$ \cite{DSPbook}.
The resulting digital signal at the RXD is
\begin{eqnarray}
r[n]:=r(nT_0) &=& 
\textstyle\sum_{k=0}^{Q-1}s[k] g(nT_0-k\cdot \Omega LT_0) + w[n], \nonumber\\
&=& \textstyle\sum_{k=0}^{Q-1}s[k] g[n-kM] + w[n], \label{rxSig3}
\end{eqnarray}
where $g[n]:=g(nT_0)$ is the discrete version of the aggregate
received pulse, $M:=\Omega L$, and $w[n]:=w(nT_0)$.

\subsection{Secure Ranging Receiver}

As the first step in secure ranging, the RXD needs to estimate
the first path timing $\tau_0$ in (\ref{chModel}) with the
received signal in (\ref{rxSig3}). Let $\hat{\tau}_0$ denote
the estimate. Following the principle of distance-bounding,
the RXD further needs to reject this timing estimate whenever
it is earlier than the true timing. Before providing a rigorous
definition of secure ranging, we need to clarify what kind of
attacks to the STS-based ranging are feasible. The following
definitions elaborate the meaning of feasible attacks in the
context of the ranging system described above.

\begin{figure}[t]
	\centering
	\epsfig{file=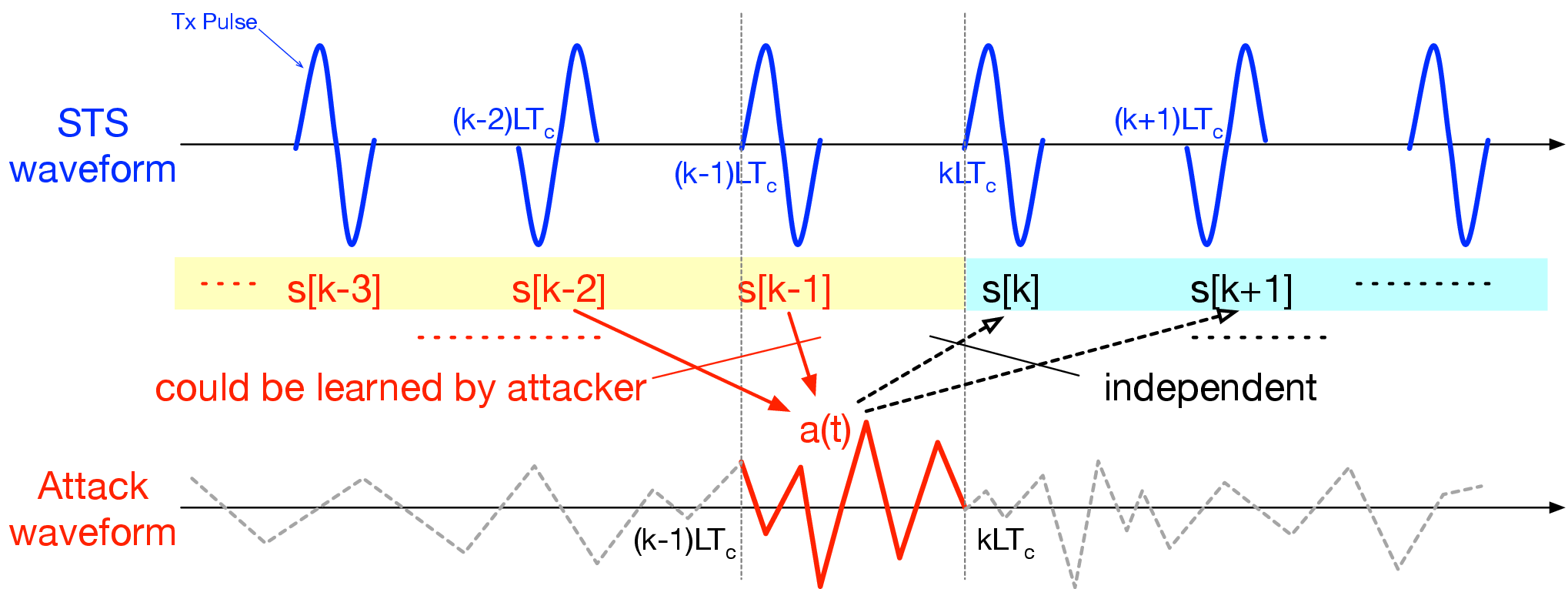, width=0.95\linewidth}
	\caption{Feasible attacks to STS ranging waveform.}
	\label{fig:FeasibleAttack}
\end{figure}

\begin{definition} \label{FeasibleAttk}
	(Feasible Attack)
	An attack to the STS ranging waveform is feasible only if this attack
	does not utilize any non-causal information from the cryptographic STS
	sequence at any time. In other words, $\forall k$, at any time within
	the interval $((k-1)L{T_c}, kL T_c)$, a feasible attack to
	the STS waveform in (\ref{txWf}) can only learn information about
	$\{s[n]|n\le k-1\}$ and is unable to make any inferences about
	$\{s[n]|n\ge k\}$.
\end{definition}

An equivalent way to define feasible attacks is as follows.
\begin{definition} \label{FeasibleAttkAlt}
	Let $a(t)$ represent the time-domain attack waveform generated by an
	attacker. The attack to the STS waveform in (\ref{txWf})
	is feasible only if $a(t)$ is independent of $\{s[n]|n\ge k\}$,
	$\forall t < kL{T_c}$ and $\forall k$.
\end{definition}

Fig. \ref{fig:FeasibleAttack} illustrates feasible attacks according
to Definition \ref{FeasibleAttkAlt}.
Next, we provide a precise definition of secure ranging in the
current context.

\begin{definition} \label{SRdef}
(Secure Ranging)
The ranging receiver at the RXD is secure only if it ensures that
a given estimate of the first path timing $\eta$ is accepted with
a probability no more than a prescribed value whenever it is
earlier than $\tau_0-\Delta$, i.e.,
\begin{equation}\label{SRProbDef}
{\sf Pr}({\sf Accept}~ \eta|\eta <\tau_0 - \Delta) \le \rho,
\end{equation}
where 
$\tau_0$ denotes the true timing of the first path, 
$\Delta>0$ is a constant representing the amount of allowed
implementation headroom, and $\rho$ is the prescribed upper
bound on the false acceptance rate. 

The probability in (\ref{SRProbDef}) is with respect to all the
random realizations of the STS sequences. Furthermore, this
security guarantee also applies to the cases where arbitrary
feasible attacks try to advance the timing estimate by manipulating
the STS ranging waveform from the TXD.
\end{definition}

The above definition builds on the original distance-bounding principle
by S. Brands and D. Chaum in \cite{1993db_Principle} while taking into
account the physical implementations with UWB-IR. In particular, this
definition allows a constant implementation headroom. Taking the use
case shown in Fig. \ref{fig:secRanging} as an example, one secure ranging
solution shall make sure the measured distance always (or with a
probability higher than $1-\rho$) upper bounds the physical distance
as $X_{\sf phy} \le Y_{\sf mea} + \Delta\cdot c$ with $c$ denoting the
speed of light.

To facilitate a rigorous proof of security, we make the following
assumption on the STS sequence $s[k]$ in (\ref{txWf}).
\begin{assumption}\label{Assumption1}
	The STS sequence $\{s[k]\}_{k=0}^{Q-1}$ in (\ref{txWf}) is a sequence
	of independent and identically distributed (IID) binary random variables
	taking value $+1$ or $-1$ with probability $0.5$, i.e.,
	${\sf Pr}(s[k]=+1)={\sf Pr}(s[k]=-1)=0.5$, $\forall k\in[0, Q-1]$.
\end{assumption}

Assumption \ref{Assumption1} indeed states the security requirement
about the STS sequence as well.
Note that the STS key is only shared between two trusted ranging
devices through an encrypted logical channel. The attacker
has no knowledge of the STS key and can not predict future STS
bits by observing the past thanks to the AES construction\footnote{
The IID property of the STS sequence in the above assumption is
based on the premise that only exhaustive key search can break
the AES construction \cite{KatzBook}.
Hoang and Shen recently proved the security and robustness of
the random bit generator that runs AES-$128$ in counter mode
in \cite{2020AES_CTRDRBG}.
} \cite{Capkun2021_HRP}.

We will review different STS receiver designs together with some feasible
attacks in Section \ref{SecCIRThr} and Section \ref{SecSecRx}. 
Furthermore, we will prove the security of the reference STS receiver
in Section \ref{SecSecRx} according to Definition \ref{SRdef}.

\subsection{Clock Frequency Offsets}\label{sSec:ClkFreqOffset}

\begin{figure}[t]
	\centering
	\epsfig{file=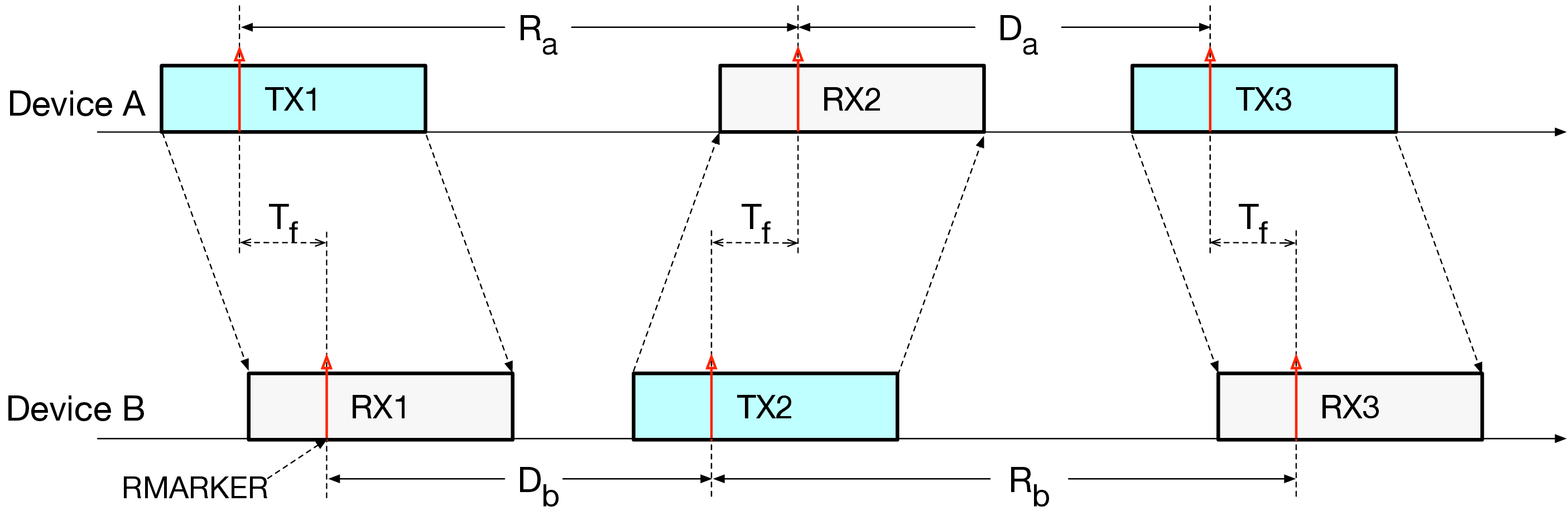, width=0.95\linewidth}
	\caption{Exchanges of ranging packets in double-sided two-way ranging.}
	\label{fig:DSTWR}
\end{figure}

When two devices perform double-sided two-way ranging (DS-TWR), the packet
exchanges and the relevant time measurements are shown in Fig. \ref{fig:DSTWR}.
In particular, $R_a$ and $R_b$ denote the ground truth round-trip times while
$D_a$ and $D_b$ denote the ground truth reply times. Note that all the time
measurements are with respect to the corresponding RMARKERs and $T_f$ represents
the time-of-flight (ToF) from Device A to Device B. Also note that the arrival
timing of the RMARKER can be inferred from the estimate of the first path
timing with the received signal in (\ref{rxSig3}). Due to the clock frequency
offsets at both devices, the measured times can be expressed as
\begin{equation}
	\hat{R}_a = k_a R_a, ~\hat{D}_a = k_a D_a, 
	~ \hat{R}_b = k_b R_b, ~\hat{D}_b = k_b D_b, \nonumber
\end{equation}
where $k_a$ ($k_b$) denotes the ratio of the clock frequency at Device A
(Device B) to the frequency of an ideal clock. As recommended in
\cite{802.15.4z}, the ToF can be estimated as 
\begin{equation}\label{ProdFormula}
	\hat{T}_f = (\hat{R}_a\cdot \hat{R}_b - \hat{D}_a\cdot \hat{D}_b)\big/
	(\hat{R}_a + \hat{R}_b + \hat{D}_a + \hat{D}_b).
\end{equation}
Furthermore, it can be shown that the relative error in $\hat{T}_f$
is only determined by the relative clock offsets \cite{802.15.4z}, i.e.,
\begin{equation}\label{ProdFormulaError}
	(\hat{T}_f-T_f)/{T_f} \approx \big((k_a-1) + (k_b-1)\big)/2.
\end{equation}
Since both $(k_a-1)$ and $(k_b-1)$ are in the order of $\pm 20$ ppm
according to the IEEE requirement \cite{802.15.4,802.15.4z}, we see
the ranging error due to clock offsets is in the order of pico-second.
We can safely neglect this error in practical applications. Accordingly,
as the signal model in (\ref{rxSig3}), we will assume both the TXD
and the RXD have ideal clocks in the sequel.

It is worth mentioning that there is no need to compensate
the local clocks at Device A and at Device B to derive the ToF
estimate in (\ref{ProdFormula}). Similar to the observations
in \cite{2023Usenixsecurity}, we strongly recommend
both Device A and Device B should not try to compensate their
local clocks according to the received waveform to avoid potential
attacks. Further note that the reference STS receiver presented in
Section \ref{SecSecRx} does not compensate the local clocks.

\section{CIR Estimation, Thresholding, and Attack} \label{SecCIRThr}

In this section, we first derive the optimal estimate of the CIR based
on the received signal model in (\ref{rxSig3}). Then we review the STS
receiver that determines the timing of the first path by thresholding
the taps in the estimated CIR as in \cite{Capkun2021_HRP,2022GhostPeak}.
We further present one feasible attack scheme that can create a fake peak
in the CIR at a specified timing. This is distinct from the ``Cicada attack''
in \cite{2010CicadaAttk} and the ``random STS attack'' in
\cite{2022GhostPeak} (a.k.a. ``ghost peak attack'') which can only
create fake peaks at random locations in the CIR.

\subsection{CIR Estimation}

By resorting to the least-squares principle, the RXD can recover the CIR
$g[n]$ in (\ref{rxSig3}) as in the following proposition. More details
can be found in Appendix \ref{AppendixCIREst}.

\begin{pro}\label{PropCIREst}
Assume the support of $g[n]$ is limited to $[0, JM-1]$, i.e.,
$g[n]=0$, $\forall n\in(-\infty, 0)\cup [JM, +\infty)$, where
$JM$ characterizes the maximum delay spread of $g[n]$.
Let ${\bm g}$ denote the vector version of the CIR, i.e.,
\begin{equation}\label{vecCIR}
{\bm g} := \left[g[0],...,g[JM-1]\right]^T.
\end{equation}
Given the signal model in (\ref{rxSig3}), the RXD can derive the following
CIR estimate with the knowledge of $\{s[k]\}_{k=0}^{Q-1}$:
\begin{equation}\label{slcCIR}
\hat{\bm g} 
= \left({\bm \Phi}^T{\bm \Phi}\right)^{-1}{\bm \Phi}^T{\bm r}
:= \left({\bm \Phi}^T{\bm \Phi}/Q\right)^{-1}\tilde{\bm g},	
\end{equation}
where 
${\bm r}:=[r[0],...,r[(Q-1+J)M-1]]^T$,
$\tilde{\bm g}:={\bm \Phi}^T{\bm r}/Q$,
and the matrix $\bm \Phi$ is a
Toeplitz matrix of size $[(Q-1+J)M]\times [JM]$. 
Specifically, the first row of $\bm \Phi$ is given by
$ {\bm \Phi}(1,:) = [s[0], {\bm 0}_{JM-1}]$
and the first column of $\bm \Phi$ is given by
${\bm \Phi}(:,1) = [s[0], {\bm 0}_{M-1}, 
s[1], {\bm 0}_{M-1}, ..., s[Q-1], {\bm 0}_{M-1},
{\bm 0}_{(J-1)M}]^T$.
The notation ${\bm 0}_{a}$ stands for a row vector of
all $0$'s of size $1$-by-$a$.
\end{pro}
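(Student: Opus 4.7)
The plan is to recast the discrete-time signal model in (\ref{rxSig3}) as a linear regression for the CIR vector $\bm g$ and then apply the standard least-squares formula. The substantive content of the proposition is the explicit description of the regression matrix $\bm\Phi$; once that is pinned down, the estimator (\ref{slcCIR}) drops out of the normal equations.

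First I would fix the signal vector length. Since $s[k]$ is supported on $[0,Q-1]$ and $g[n]$ on $[0,JM-1]$, the noiseless convolution $\sum_{k}s[k]g[n-kM]$ is nonzero only for $n\in[0,(Q-1+J)M-1]$, which justifies truncating $\bm r$ to exactly that length. Next I would identify $\bm\Phi$ by reading off coefficients. Writing each received sample as $r[n]=\sum_{m=0}^{JM-1}\Phi_{n+1,m+1}\,g[m]+w[n]$ and matching with (\ref{rxSig3}), the entry $\Phi_{n+1,m+1}$ must equal $s[k]$ whenever $n-m=kM$ for some $k\in\{0,\dots,Q-1\}$, and must vanish otherwise. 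This entry depends only on the difference $n-m$, so $\bm\Phi$ is Toeplitz; specializing to $n=0$ recovers the first row $[s[0],{\bm 0}_{JM-1}]$, and specializing to $m=0$ produces the first column as the $s[k]$'s interleaved with $M-1$ zeros and padded by $(J-1)M$ trailing zeros, matching the statement exactly.

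With $\bm r=\bm\Phi\bm g+\bm w$ in hand, the least-squares estimator follows immediately from the normal equations $\bm\Phi^T\bm\Phi\,\hat{\bm g}=\bm\Phi^T\bm r$, yielding (\ref{slcCIR}); the second equality in (\ref{slcCIR}) is merely a rescaling by $Q$ to isolate $\tilde{\bm g}=\bm\Phi^T\bm r/Q$, which I would note is the matched-filter-style correlator output and will be reused in later sections.

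The one non-routine point to address is invertibility of $\bm\Phi^T\bm\Phi$. Under Assumption \ref{Assumption1}, the symbols $s[k]$ are IID uniform on $\{\pm 1\}$, and each entry of $\bm\Phi^T\bm\Phi/Q$ is (up to boundary truncation) an empirical correlation of this $\pm 1$ sequence evaluated at a multiple-of-$M$ lag. By the strong law of large numbers the diagonal entries converge to $1$ and the off-diagonals to $0$, so $\bm\Phi^T\bm\Phi/Q\to\bm I$ almost surely as $Q\to\infty$, and the inverse exists for $Q$ sufficiently large. The main obstacle, such as it is, is the bookkeeping required to align the convolution index $k$ with the Toeplitz row/column coordinates so as to reproduce the exact first-row/first-column pattern claimed; the least-squares step itself is textbook once the regression model is correctly stated.
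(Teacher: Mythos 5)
Your proposal is correct and follows essentially the same route as the paper: rewrite (\ref{rxSig3}) as the linear model ${\bm r}={\bm \Phi}{\bm g}+{\bm w}$ with the stated Toeplitz $\bm\Phi$ and read off the least-squares solution from the normal equations. Your additional remark on the invertibility of ${\bm \Phi}^T{\bm \Phi}$ via the law of large numbers is a sensible bonus that the paper's proof leaves implicit, but it does not change the argument.
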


Note that $\tilde{\bm g}$ in (\ref{slcCIR}) represents the intermediate
CIR estimate resulting from the simple correlation of the STS sequence with
the received signal \cite{2022GhostPeak,Capkun2021_HRP}. To cancel the
interference among different CIR taps due to the non-ideal auto-correlation
of STS sequence, additional matrix multiplication and inverse operations
are performed to derive a cleaner estimate $\hat{\bm g}$ as shown in
(\ref{slcCIR}).

\subsection{First Path Validation via CIR Thresholding}

In the absence of adversarial attacks, the estimated CIR $\hat{\bm g}$
according to Proposition \ref{PropCIREst} will be a superposition of the
actual CIR: ${\bm g}$ and the estimation error: $\bm e$, i.e.,
$\hat{\bm g} = {\bm g} + {\bm e}$. After learning about the standard
deviation $\sigma$ of the samples in $\bm e$, one CIR tap could be validated
as a true one when the amplitude of the tap exceeds a certain acceptance
threshold with respect to the learned standard deviation according
to the desired false acceptance rate. One example is illustrated in
Fig. \ref{fig:CIRThreshold} in Appendix \ref{AppendixCIRThr} where the
threshold is set to $3\sigma$ to realize a false acceptance rate around
$0.1\%$ under the Gaussian distribution.

\begin{figure}[t]
	\centering
	\epsfig{file=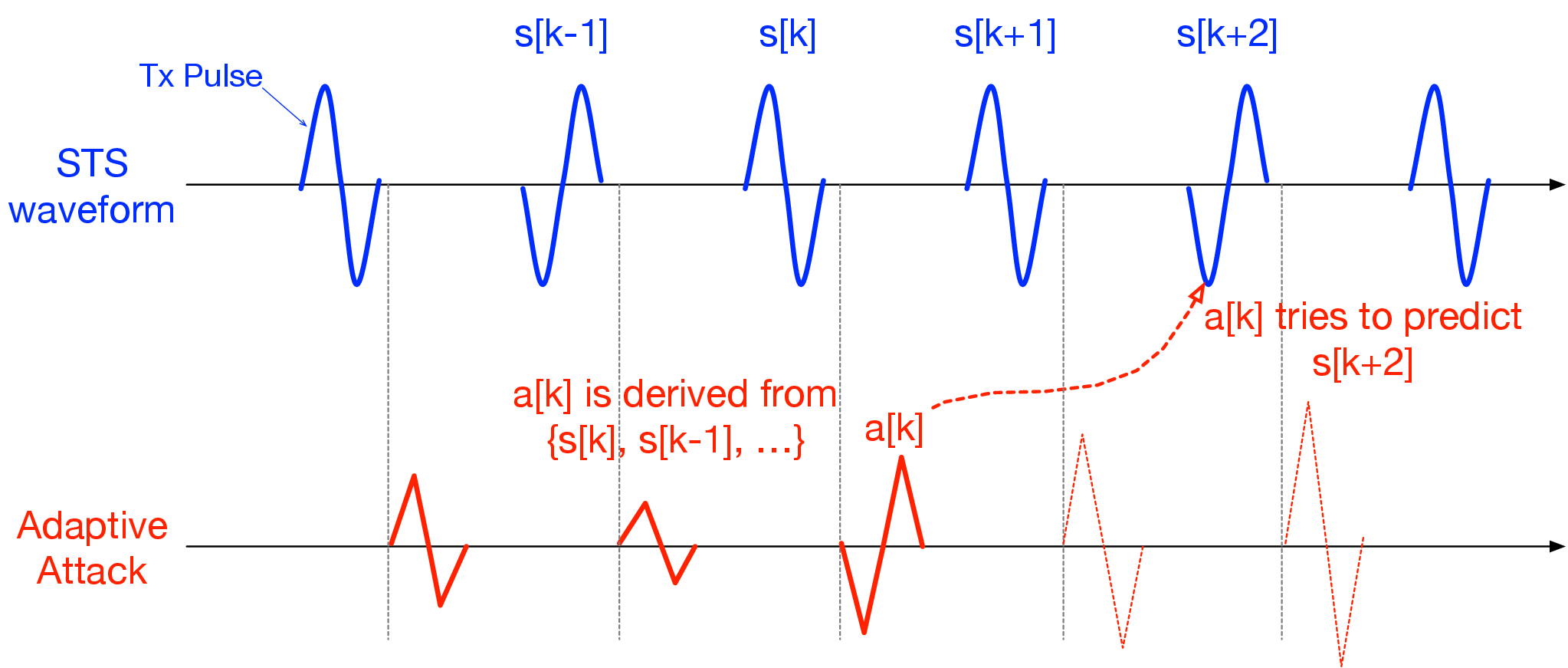, width=0.95\linewidth}
	\caption{An adaptive scheme to attack the STS waveform.}
	\label{fig:AdaptiveAttack}
\end{figure}

On the other hand, the authors in \cite{2022GhostPeak, Capkun2021_HRP}
showed that the estimated CIR could exhibit fake peaks earlier
than the true timing of the first path when an adversary attacks the
STS waveform. In particular, either the ``Cicada attack'' in
\cite{2010CicadaAttk} or the ``random STS attack'' in
\cite{2022GhostPeak} (a.k.a. ``ghost peak attack'') could be utilized.
Fake early peaks are also illustrated in Fig. \ref{fig:CIRThreshold}.
It has been recognized that it is hard
to configure the acceptance threshold to strike a balance between the
false acceptance rate in the case of attacks and the detection
performance in validating the true first path when it is weak.

Neither the ``Cicada attack'' nor the ``random STS attack'' are able
to create fake peaks at given locations in the CIR recovered by the RXD.
Next, we will show one particular feasible attack scheme that is able
to generate peaks at the designated locations in the CIR based on the
targeted distance reduction.

\subsection{Adaptive Attacking Scheme}

In order to create a peak at a designated location in the CIR, one
adversary can follow the procedure in Algorithm \ref{algoAdapAttk}.
Although the history STS symbols do not carry any information about
the future ones according to Assumption \ref{Assumption1},
the adversary could try to predict the future STS symbol $s[k+2]$ by
observing the legitimate STS waveform in a causal manner as illustrated
in Fig. \ref{fig:AdaptiveAttack}.
To this end, the adversary simply updates the empirical estimates of the
STS correlation at lags $\{\Lambda, \Lambda+1,..., \Lambda+H\}$
with the observed STS symbol $s[k]$ as in Step-1 of Algorithm
\ref{algoAdapAttk}.

In Step-2, the adversary exploits these updated empirical estimates
of the STS correlation to synthesize the attack symbol $a[k]$ as in
(\ref{attkSymb}). The idea is
to utilize the empirical STS correlation at lag $\Lambda+\lambda$,
i.e., ${\sf corr}[k; \Lambda+\lambda]$, to infer information about
the future STS symbol $s[k+\Lambda]$ from the history STS symbol
$s[k-\lambda]$ which has been observed. 
We have aggregated $(1+H)$ inference terms from
$(1+H)$ history STS samples, i.e., $\{s[k-H], s[k-H+1], ..., s[k]\}$
in (\ref{attkSymb}). In Fig. \ref{fig:AdaptiveAttack}, the attack symbol
$a[k]$ tries to predict STS symbol $s[k+2]$. The step size of
prediction in Fig. \ref{fig:AdaptiveAttack} is thus $\Lambda=2$.

In Step-3, the adversary transmits the attack waveform by modulating
the transmit pulse $p_T(t)$ with the synthesized attack symbols.
Note that we have assumed the adversary employs the same transmit
pulse as in (\ref{txWf}) to obtain (\ref{attkSig}). This is
for the sake of conciseness in the following derivations. Other
pulse shapes are also allowed to realize this adaptive attack.

\begin{algorithm}[t]
\caption{Adaptive STS Attack on a Designated CIR Tap}
\label{algoAdapAttk}
\begin{algorithmic}[1]
\item {\bf INPUT}: 
\begin{itemize}
\item[-] $\Lambda>0$: the step size of empirical prediction;

\item[-] $H\ge 0$: the number of additional history samples utilized
during the empirical prediction.
\end{itemize}

\item {\bf DO} the following steps after learning about the current STS
symbol $s[k]$
\begin{itemize}
	\item[-] {\bf Step-1}: update the empirical STS correlation value
	${\sf corr}[k; \Lambda+\lambda]$, $\lambda\in[0,H]$, as
	follows.
	\begin{equation}
	{\sf corr}[k; \Lambda+\lambda] = \textstyle\sum_{l=\Lambda+H}^{k}
	s[l]s[l-\Lambda-\lambda];
	\end{equation}

	\item[-] {\bf Step-2}: determine the attack symbol $a[k]$ as follows.
	\begin{equation}\label{attkSymb}
	a[k] = \textstyle\sum_{\lambda=0}^{H} s[k-\lambda]\cdot 
	{\sf corr}[k; \Lambda+\lambda];
	\end{equation}
	
	\item[-] {\bf Step-3}: modulate the pulse with $a[k]$ and transmit the
	following attack waveform:
	\begin{equation} \label{attkSig}
	a(t) = \textstyle\sum_{k=0}^{Q-1} a[k]\cdot p_T(t-k\cdot L \cdot T_c).
	\end{equation}
\end{itemize}
\item {\bf END}
\end{algorithmic}
\end{algorithm}

For the sake of clarity, we assume the attack waveform $a(t)$
in (\ref{attkSig}) propagates through a single path channel\footnote{
Due to laws of physics, the overall delay of the attack waveform
is always lower bounded by the delay of the first path in the
channel from the TXD to the RXD for the attacker to	learn the
STS samples causally.} with gain $\theta$ and a delay of $\tau_0$.
Accordingly, the received signal at the RXD becomes
\begin{equation}\label{rxSigAttk}
r_a(t) = \theta \textstyle\sum_{k=0}^{Q-1}a[k] p_R(t-k\cdot L{T_c}-\tau_0).
\end{equation}
We have neglected the additive
noise term in (\ref{rxSigAttk}). Furthermore, we assume that the
legitimate signal is overwhelmed by the attack one. Accordingly, the
legitimate signal is also neglected here.

After sampling the signal in (\ref{rxSigAttk}) with a time period of
$T_0 = T_c/\Omega$ as in (\ref{rxSig3}), we have
\begin{eqnarray}
r_a[n]:=r_a(nT_0)= \theta \textstyle\sum_{k=0}^{Q-1}a[k] p_R[n-kM-\bar{\tau}_0] ,
\label{rxSigAttk2}
\end{eqnarray}
where $\tau_0$ is assumed to be an integer multiple of $T_0$, i.e.,
$\tau_0:= \bar{\tau}_0 T_0$, and $p_R[n]:=p_R(nT_0)$.

Given the received signal in (\ref{rxSigAttk2}), the RXD can
recover the CIR according to Proposition \ref{PropCIREst} as follows.
\begin{eqnarray}
	\widetilde{\sf CIR} &=&{\bm \Phi}^T{\bm r}_a/Q, \label{estCIRattk1}\\
	\widehat{\sf CIR} &=& \left({\bm \Phi}^T{\bm \Phi}/Q\right)^{-1}
	\widetilde{\sf CIR},
	\label{estCIRattk2}	
\end{eqnarray}
where ${\bm r}_a:=[r_a[0],...,r_a[(Q-1+J)M-1]]^T$.
Now, we can establish the following proposition regarding the
adaptive attack scheme in Algorithm \ref{algoAdapAttk}.

\begin{pro}\label{PropAdapAttk}	
Let $\bar{\tau}_0$ denote the actual first path timing in the
channel between the attacker and the RXD as in (\ref{rxSigAttk2}).
Assume $p_R(t)$ is confined within $[0, MT_0)$.
The attack waveform as specified in Algorithm \ref{algoAdapAttk}
is able to create a peak at
$l_* := (\bar{\tau}_0 + \epsilon- \Lambda M)$,
in the CIR recovered by the RXD as in (\ref{estCIRattk2}),
where $\epsilon$ could take any integer in $[0, M-1]$.
Furthermore, the mean value of the tap at $l_*$ is a multiple of the
actual channel gain as $Q$ gets large, i.e.,
\begin{equation}\label{meanCIRlimit}
\lim_{Q\rightarrow\infty} {\sf E}\left[\widehat{\sf CIR}_{l_*}\right]
=-\frac{H+1}{2} \theta \cdot p_R[\epsilon],
\end{equation}
where the expectation is with respect to the random realizations
of the STS sequence under Assumption \ref{Assumption1}.
\end{pro}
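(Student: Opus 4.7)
The plan is to reduce $\widehat{\sf CIR}_{l_*}$ via a Neumann expansion around the identity and then extract the non-vanishing mean from the interaction between the matrix-inverse correction and the off-center taps of $\widetilde{\sf CIR}$. Concretely, I would write ${\bm \Phi}^T{\bm \Phi}/Q = {\bm I} + {\bm E}$ and appeal to Assumption \ref{Assumption1} together with the Toeplitz structure of ${\bm \Phi}$ to show, via the strong law of large numbers applied to sums of IID Rademacher products, that ${\sf E}[{\bm E}] = {\bm 0}$ and $\|{\bm E}\|$ is $O(1/\sqrt{Q})$ almost surely. This justifies the first-order expansion $\widehat{\sf CIR}_{l_*} = \widetilde{\sf CIR}_{l_*} - \sum_{l'} E_{l_*, l'}\,\widetilde{\sf CIR}_{l'} + \textrm{(higher order)}$, with standard fourth-moment bounds controlling the remainder to $o(1)$ as $Q\to\infty$.

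Next I would show that the leading term ${\sf E}[\widetilde{\sf CIR}_{l_*}]$ actually vanishes. Under the support assumption $p_R\subset[0,MT_0)$, only the index $k'=k-\Lambda$ in (\ref{rxSigAttk2}) contributes to tap $l_*$, giving $\widetilde{\sf CIR}_{l_*} = (\theta\,p_R[\epsilon]/Q)\sum_k s[k]\,a[k-\Lambda]$. Since the causality imposed in Algorithm \ref{algoAdapAttk} forces $a[k-\Lambda]$ to be a polynomial in $\{s[n]:n\le k-\Lambda\}$, and $s[k]$ is independent of this history and has zero mean by Assumption \ref{Assumption1}, each summand is zero in expectation. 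The entire nonzero limit must therefore come from the cross-terms ${\sf E}[E_{l_*,l'}\,\widetilde{\sf CIR}_{l'}]$ in the Neumann correction, which is a structural but non-obvious phenomenon.

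The main obstacle is the combinatorial enumeration of these cross-terms. Expanding $a[k'-(\Lambda-\Delta)]$ via (\ref{attkSymb}) turns each such expectation into a multiple sum of sixth-order Rademacher moments, and I would use the fact that a product of $s[\cdot]$'s has non-vanishing expectation only when its indices pair up. The structural observation to exploit is that a pairing with two free summation indices---and hence an $O(Q^2)$ contribution surviving the $1/Q^2$ normalization---exists precisely when the outer factor $s[k']$ inside $\widetilde{\sf CIR}_{l'}$ self-pairs with one of the inner factors $s[k'-(\Lambda-\Delta)-\lambda]$ coming from (\ref{attkSymb}). This forces $\lambda = \Delta - \Lambda$ with $\Delta \in \{\Lambda, \Lambda+1, \ldots, \Lambda+H\}$, i.e., exactly $H+1$ admissible values of $l'$. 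For each such $\Delta$, the remaining four indices must pair as $k = l-\Lambda-\lambda$ and $k+\Delta = l$, which forces $l = k+\Delta$, and the free parameters $(k,k')$ then range over a triangular region of cardinality asymptotic to $Q^2/2$. Each admissible $\Delta$ thus contributes $\theta\,p_R[\epsilon]/2$ to $\sum_{l'}{\sf E}[E_{l_*,l'}\widetilde{\sf CIR}_{l'}]$, and summing over the $H+1$ values while inheriting the minus sign from the $-{\bm E}$ term in the Neumann expansion produces the target.

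To close the argument I would verify that no other pairings survive at order $Q^2$---so that the enumerated $H+1$ cases exhaust the leading behavior---and that the quadratic and higher-order Neumann corrections, as well as boundary effects in the index ranges of ${\sf corr}[\cdot;\cdot]$, contribute $o(1)$. Combining these pieces yields $\lim_{Q\to\infty}{\sf E}[\widehat{\sf CIR}_{l_*}] = -\frac{H+1}{2}\,\theta\cdot p_R[\epsilon]$.
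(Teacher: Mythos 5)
Your proposal follows essentially the same route as the paper's own proof: a first-order Neumann expansion of $\left({\bm \Phi}^T{\bm \Phi}/Q\right)^{-1}$ about the identity, the observation that ${\sf E}[\widetilde{\sf CIR}_{l_*}]=0$ because $a[n-\Lambda]$ is measurable with respect to the STS history preceding $s[n]$, and a moment/pairing computation showing that exactly the $H+1$ cross-terms with $\lambda\in[0,H]$ survive, each contributing $-\theta\, p_R[\epsilon]/2$ via a triangular index region of cardinality $\sim Q^2/2$. The only difference is presentational (you phrase the enumeration as Rademacher pairings and explicitly flag control of the Neumann remainder, which the paper's outline treats as a first-order approximation without further justification), so the argument is correct and matches the paper.
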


Appendix \ref{AppendixAdapAttk} outlines the proof of the above
proposition. Numerical results corroborating the result in
(\ref{meanCIRlimit}) are presented in Section \ref{SubSec_SimAdapAttk}.

\subsection{More Discussions}

From prior discussions, it is evident that developing a secure
ranging capability for an STS receiver is challenging when
relying on comparing the taps in the CIR with a particular threshold.
Furthermore, an adaptive attack scheme has been presented that
can generate a fake peak at a specific location in the CIR.
This is due to the fact that the receiver wants to obtain a
clean CIR estimate from the received STS ranging waveform.
One strong tap could easily overshadow the weak one due to the
inter-tap interference. The receiver thus needs to cancel the
interference from the strong taps to reveal the weak one.
It turns out the interference cancellation process at the receiver
could be exploited by the adversary to reduce the distance measurement.
It should be noted that the receiver is unaware of whether the
received waveform is being manipulated. 

We share the same view as in \cite{Capkun2021_HRP, 2022GhostPeak}
that the simple CIR thresholding based approaches
are not recommended for applications where secure ranging is required.
However, the encrypted STS ranging waveform in 4z HRP mode of operation
is intact. In the following section, we will present a different STS
receiver design and prove its security according to Definition \ref{SRdef}.

\begin{algorithm}[t]
	\caption{STS Receiver with Proved Security for 4z HRP}
	\label{algoSecRx}
	\begin{algorithmic}[1]
		
		\item {\bf INPUT}: 
		\begin{itemize}
			\item[-] $l_*$: index of a CIR tap to be validated; 
			
			\item[-] $\{\mathring{r}[n]\}$: the SYNC samples earlier than
			the first STS sample that arrives at the timing corresponding
			to the CIR tap $l_*$ in the current STS packet;
			
			\item[-] $\mathring{g}[n]$, $n\in[0, JM-1]$:
			the CIR estimate with the SYNC samples $\{\mathring{r}[n]\}$;
			
			\item[-] $\mathring{\beta}>0$: a scaling parameter
			according to the signal power level of the received
			SYNC samples $\{\mathring{r}[n]\}$;
			
			\item[-] $\gamma>0$: a detection threshold.
		\end{itemize}
		
		\item {\bf DO} the following steps to decide whether the tap
		$l_*$ should be accepted or rejected
		
		\begin{itemize}
			\item[-] {\bf Cancellation}: Cancel the inter-tap interference from
			$y[n]$ in (\ref{yn}) with the the CIR estimate $\mathring{g}[n]$,
			i.e.,
			\begin{equation}\label{intfCancel}
				\tilde{x}[n] = 
				y[n] - \textstyle\sum_{\zeta > 0} \mathring{g}[l_*+\zeta M]s[n-\zeta];
			\end{equation}
			
			\item[-] {\bf Saturation}: Scale the signal $\tilde{x}[n]$ and then
			input to a non-linear saturation function, i.e.,
			\begin{equation}\label{saturation}
				x[n] = {\mathbb Q}\left({\sf Re}\left\{
				\mathring{\beta} \cdot \mathring{g}[l_*]^{\dagger}\cdot
				\tilde{x}[n]
				\right\}
				\right),
			\end{equation}
			where ${\sf Re}\{\cdot\}$ returns the real part of the argument,
			$(\cdot)^{\dagger}$ is the conjugate operation, and
			the function ${\mathbb Q}(\cdot)$ is defined as
			\begin{equation}\label{funcQ}
				{\mathbb Q}(x)=
				\begin{cases}
					-1 & \text{if } x \le -1 \\
					~x & \text{if } x \in(-1, +1) \\
					+1 & \text{if } x \ge +1
				\end{cases};
			\end{equation}
			
			\item[-] {\bf Validation}: 
			Compute the decision metric as
			\begin{equation}\label{decMetric}
				T({\bm x}) = \textstyle\sum_{n=0}^{Q-1} x[n]s[n]\big/{\sqrt{Q}},
			\end{equation}
			and validate CIR tap $l_*$ according to the following criterion:
			\begin{equation}\label{validation}
				\begin{cases}
					\text{Accept $l_*$} & \text{if } T({\bm x}) \ge \gamma \\
					\text{Reject $l_*$} & \text{if } T({\bm x}) < \gamma
				\end{cases},
			\end{equation}
			where $\gamma$ is a threshold.
		\end{itemize}
		
		\item {\bf END}
	\end{algorithmic}
\end{algorithm}

\section{A Reference STS Receiver with Proved Security} \label{SecSecRx}

\begin{figure*}[t]
	\begin{eqnarray}
		\label{yn}
		y[n]:= r[l_*+nM]
		&=& \textstyle\sum_{k=0}^{Q-1} s[k]g[l_*+nM-kM] + w[l_*+nM]
		= \textstyle\sum_{k} g[k]s_e[l_*+nM-k] + w[l_*+nM] \nonumber\\
		&=& g[l_*]s[n] + \textstyle\sum_{\zeta \neq 0} g[l_*+\zeta M]s[n-\zeta] 
		+\dot{w}[n]
	\end{eqnarray}
	\hrulefill
\end{figure*}

As shown in Proposition \ref{PropCIREst}, the RXD can first obtain an
estimate of the CIR after observing the signal $r[n]$ in (\ref{rxSig3}).
Next the RXD identifies a particular tap in the estimated CIR as the
first path candidate and will utilize the corresponding timing to derive
the associated ToF value. A critical task here is to either accept this
candidate tap as a true physical path or reject it. In the case of attack,
there could be fake peaks in the estimated CIR that are earlier than
the true physical first path. A secure STS receiver needs to reject
any fake peaks earlier than the true first path reliably. Specifically,
whenever validating the CIR tap $l_*$, the RXD essentially determines
one of the following two hypotheses:
\begin{itemize}
\item ${\cal H}_0$: tap $l_*$ does not correspond to any true physical
path;

\item ${\cal H}_1$: tap $l_*$ corresponds to a real physical path.
\end{itemize} 
The probability of accepting ${\cal H}_1$ when ${\cal H}_0$ is true,
i.e., ${\Pr}({\cal H}_1; {\cal H}_0)$, dictates the security of the
receiver when the candidate tap $l_*$ is earlier than the true first path. 
Meanwhile, the probability of deciding ${\cal H}_0$ when ${\cal H}_1$
is true, i.e., ${\Pr}({\cal H}_0; {\cal H}_1)$, characterizes
the miss detection rate when a true physical path is indeed present
at $l_*$.

Following the signal model in (\ref{rxSig3}) and its alternative
expression in (\ref{expandedSig}), we can extract the signal samples
that convey information about the tap $l_*$ as in (\ref{yn}),
where the first term in (\ref{yn}) comes from the tap $l_*$,
the second term captures the inter-tap interference,
$\dot{w}[n]:=w[l_*+nM]$, and $s_e[k]$ denotes the expanded
version of $s[k]$, i.e., $s_e[k] := s[k/M]$,
$\forall k=0,\pm M, \pm 2M, ...$ and $s_e[k]=0$ otherwise.

In the following subsections, we will first present a reference
STS receiver and then show how this design guarantees ranging
security.

\subsection{Reference STS Receiver Design}

We have illustrated how to validate a given CIR tap $l_*$ in Algorithm
\ref{algoSecRx}. In particular, the RXD first processes the SYNC
before utilizing the STS waveform for ranging security. 
Note the SYNC is ahead of the STS as depicted in
Fig. \ref{fig:stsPktFormats}. Accordingly, the receiver design
in Algorithm \ref{algoSecRx} makes use of the CIR estimate from
the received SYNC. It is worthwhile to mention that the CIR
estimate with the SYNC could follow the same principle as in
Proposition \ref{PropCIREst}. Specifically, the ideal auto-correlation
property of the ternary preamble codes in the SYNC renders
${\bm \Phi}^T{\bm \Phi} \propto {\bm I}_{JM}$, where ${\bm I}_{JM}$
stands for an identity matrix of size $JM\times JM$. As a result,
the need of matrix inversion in (\ref{slcCIR}) is avoided.

The ``Cancellation'' step in Algorithm \ref{algoSecRx} is
straightforward according to the signal model in (\ref{yn}).
However, unlike conventional designs, only those taps later
than tap $l_*$ are canceled in (\ref{intfCancel}).
Albeit minor, this turns out to be an important contribution to the
security of the design in Algorithm \ref{algoSecRx}. More
analyses will be presented in Section \ref{sSec_Security}.

During the step of ``Saturation'', we first scale the post-cancellation
signal before performing the saturation. This is to optimize the
detection performance when tap $l_*$ corresponds to a real physical
path while maintaining the specified security level. More details
on this will be illustrated in Section \ref{sSec_DetPerf}.

The decision metric $T({\bm x})$ in (\ref{decMetric}) is compared
with a detection threshold $\gamma$ in the step of ``Validation''
to make a decision regarding the tap $l_*$. This value of this
threshold should be configured according to the desired security
level as explained in Section \ref{sSec_Security}.


\subsection{Security Analysis}\label{sSec_Security}

In Section \ref{sSec:ClkFreqOffset}, we have justified that we can
safely assume both the TXD and the RXD have ideal clocks. In the
following security analysis, we will first look into the scenario
where the clock at the attacker ticks at the same rate as the TXD
clock. Section \ref{sSec:ClockOffsetAttk} will examine how an attacker
could take advantage of the clock offsets relative to the TXD. 

To analyze the security of the reference STS receiver in Algorithm
\ref{algoSecRx}, we can just focus on the case when the CIR tap $l_*$
is earlier than the first path in (\ref{yn}). When
$l_*<\tau_0/T_0$, according to (\ref{aggPulse}), we see
$g[l_*+\zeta M]=0$, $\forall \zeta \le 0$, and $y[n]$ in (\ref{yn})
becomes
\begin{eqnarray}
y[n] = \dot{w}[n] + y_{\sf legit}[n],
\label{H0}
\end{eqnarray}
where $ y_{\sf legit} := \sum_{\zeta > 0} g[l_*+\zeta M]s[n-\zeta]$
and $\dot{w}[n]$ models both the adversarial attack and the additive
noise. According to Definition \ref{FeasibleAttkAlt}, for any 
feasible attack, the received waveform from the attacker at time $n$,
i.e., $\dot{w}[n]$ in (\ref{H0}), is independent to the STS sequence
starting from time $n$, i.e., $\{s[k]|k\ge n\}$. Note the information
about $s[n]$ is only available when the actual first path conveying
$s[n]$ arrives at the RXD. Also note that the legitimate signal
$y_{\sf legit}[n]$ is also independent of $\{s[k]|k\ge n\}$.

Following the steps specified in Algorithm \ref{algoSecRx}, we can
verify that the signal after cancellation in (\ref{intfCancel})
and the saturated signal in (\ref{saturation}) are both independent
of the STS sequence: $\{s[k]|k\ge n\}$. Specifically, the estimate
of CIR tap $\mathring{g}[n]$ is based on the received SYNC that
arrives earlier than the STS. No matter whether the adversary also
attacks the SYNC or not, we can see that the tap estimate
$\mathring{g}[n]$ from the SYNC is always independent of
the STS sequence. Similarly, the scaling parameter $\mathring{\beta}$
is also independent of the STS sequence given that it is determined
from the received SYNC as well. Note that the SYNC samples used
to estimate $\mathring{g}[n]$ and $\mathring{\beta}$ are those
earlier than the first STS sample corresponding to the CIR tap
$l_*$ to be validated. Also note that the SFD in Fig. \ref{fig:stsPktFormats}
provides a time gap of more than $2.9$ $\mu$s between SYNC
and STS as specified in 4z \cite{802.15.4z}. Algorithm \ref{algoSecRx}
makes no assumption regarding whether the SYNC and SFD are under attack
or not. We can establish the security of the detector in
(\ref{validation}) as follows. 

\begin{pro}\label{PropSecurityPerf}	
When validating a CIR tap $l_*$ that is earlier than the true
physical first path, the false acceptance rate of the STS receiver
in Algorithm \ref{algoSecRx} is upper bounded as
\begin{equation}\label{FA}
{\sf Pr}\left({\sf Accept~}l_*\Big|l_*<{\tau_0}/{T_0}\right)
\le \exp\left(-\gamma^2/2\right),
\end{equation}
where the probability is with respect to the random STS
sequence under Assumption \ref{Assumption1}.
Meanwhile, the upper bound in (\ref{FA}) is valid under
arbitrary feasible attacks.
\end{pro}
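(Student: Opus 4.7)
The plan is to reduce the claim to the Azuma--Hoeffding inequality applied to the sum $\sqrt{Q}\,T(\bm{x})=\sum_{n=0}^{Q-1} x[n]s[n]$. The enabling structural fact is that each $x[n]$ depends only on past STS symbols $\{s[k]:k<n\}$ together with randomness (SYNC samples, channel noise, attack waveform) that is itself independent of $\{s[k]:k\ge n\}$. I would establish this in four short bookkeeping steps keyed to the stages of Algorithm~\ref{algoSecRx}. (i) The SYNC-derived quantities $\mathring{g}$ and $\mathring{\beta}$ are built from samples that arrive strictly before the first STS chip feeding tap $l_*$, so by Definition~\ref{FeasibleAttkAlt} they are independent of the entire STS sequence even if the adversary attacks the SYNC and SFD. (ii) Because $l_*<\tau_0/T_0$, causality in (\ref{aggPulse}) forces $g[l_*+\zeta M]=0$ for $\zeta\le 0$, so the legitimate component $y_{\sf legit}[n]=\sum_{\zeta>0}g[l_*+\zeta M]\,s[n-\zeta]$ depends only on past STS symbols. (iii) The attack-plus-noise term $\dot{w}[n]$ is received at RXD time $l_*T_0+nLT_c$; since the first physical path carrying $s[n]$ cannot reach the RXD before $nLT_c+\tau_0$ (strictly later, because $l_*T_0<\tau_0$), no feasible attacker can have encoded $s[n]$ or any later STS symbol into $\dot{w}[n]$. (iv) Crucially, the cancellation (\ref{intfCancel}) is one-sided, summing only over $\zeta>0$, so the subtracted term involves only past STS symbols as well. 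Combining (i)--(iv), $\tilde{x}[n]$, and hence $x[n]=\mathbb{Q}(\mathrm{Re}\{\mathring{\beta}\,\mathring{g}[l_*]^{\dagger}\,\tilde{x}[n]\})$, is measurable with respect to a $\sigma$-algebra $\mathcal{F}_n$ that is independent of $\{s[k]:k\ge n\}$; the saturation in (\ref{funcQ}) additionally enforces the uniform bound $|x[n]|\le 1$.

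Given this measurability, Assumption~\ref{Assumption1} makes $s[n]$ a zero-mean $\pm 1$ Rademacher variable independent of $\mathcal{F}_n$, so $\mathbb{E}[x[n]s[n]\mid\mathcal{F}_n]=x[n]\,\mathbb{E}[s[n]]=0$. Hence $(x[n]s[n])_{n=0}^{Q-1}$ is a martingale difference sequence with uniformly bounded increments $|x[n]s[n]|\le 1$, and the one-sided Azuma--Hoeffding bound yields
\[
{\sf Pr}\!\left(\sum_{n=0}^{Q-1} x[n]s[n]\ge \gamma\sqrt{Q}\right)
\;\le\; \exp\!\left(-\frac{(\gamma\sqrt{Q})^2}{2\sum_{n=0}^{Q-1} 1^2}\right)
\;=\; \exp(-\gamma^2/2),
\]
which is precisely (\ref{FA}) after applying the acceptance rule $T(\bm{x})\ge\gamma$. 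Because the argument merely conditions on, rather than models, the attack trace and the noise, the bound holds uniformly over every feasible attack.

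The main obstacle is the measurability bookkeeping in (i)--(iv). It rests on two essential design choices in Algorithm~\ref{algoSecRx}: the one-sided cancellation and the use of SYNC (rather than the STS being validated) to estimate $\mathring{g}$ and $\mathring{\beta}$. Either relaxation would couple $x[n]$ to some future $s[k]$, destroy the martingale difference property, and reopen the door to the adaptive attack of Algorithm~\ref{algoAdapAttk}. A careful write-up therefore spends nearly all of its effort verifying the conditional-independence structure, after which Azuma--Hoeffding supplies the exponential bound in a single line.
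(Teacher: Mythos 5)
Your proposal is correct and follows essentially the same route as the paper's proof: establish that $x[n]$ is independent of $\{s[k]:k\ge n\}$ under any feasible attack, conclude that $T_n=\sum_{k<n}x[k]s[k]$ is a martingale with increments bounded by $1$ (thanks to the saturation), and apply the Azuma--Hoeffding inequality with $\epsilon=\gamma\sqrt{Q}$ to obtain $\exp(-\gamma^2/2)$. The only difference is presentational: the measurability bookkeeping in your steps (i)--(iv) appears in the paper as discussion preceding the proposition rather than inside the proof itself.
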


\begin{proof}
	Let $T_n := \sum_{k=0}^{n-1} x[k]s[k]$. We have
	\begin{equation}\label{eq1}
		\begin{split}
			&{\sf E}[T_{n+1}|T_{n}, T_{n-1},...,T_0] \\
			& = {\sf E}[T_{n}+x[n]s[n]|T_{n}, T_{n-1},...,T_0] \\
			& = T_{n} + {\sf E}[x[n]s[n]|T_{n}, T_{n-1},...,T_0] = T_{n},
		\end{split}
	\end{equation}
	where we have leveraged the independence between $x[n]$ and $s[n]$
	under arbitrary feasible attacks in obtaining the second equality. 
	From (\ref{eq1}), we see the sequence $\{T_n\}$ is
	a martingale with bounded increments, i.e.,
	\begin{equation}\label{eq2}
		|T_n-T_{n-1}|\le 1, \forall n. 
	\end{equation}
	According to the Azuma-Hoeffding inequality
	\cite{1963Hoeffding,RandomProcbook}
	as summarized in Appendix \ref{AppendixHoeffding}, $\forall \epsilon>0$,
	we have
	\begin{equation}\label{eq3}
		{\sf Pr}(T_Q\ge \epsilon) \le \exp\left(-{\epsilon^2}\Big/{2Q}\right).
	\end{equation}
	The false acceptance rate with the detector in (\ref{validation})
	is thus given by
	\begin{equation}\label{eq4}
		{\sf Pr}(T({\bm x})\ge \gamma) = {\sf Pr}(T_Q \ge \sqrt{Q}\cdot \gamma)
		\le \exp\left(-{\gamma^2}\big/{2}\right).
	\end{equation}
\end{proof}

The upper bound in (\ref{FA}) builds the foundation for the
security of the STS receiver in Algorithm \ref{algoSecRx}
according to Definition \ref{SRdef}. To see this, let $\eta$
denote the estimate of the first path timing $\tau_0$ from
the CIR $\mathring{g}[n]$. Various schemes are available
to estimate $\tau_0$ by taking into account the underlying
structure of the CIR as in (\ref{aggPulse}), i.e.,
$g(t) = \sum_{l\ge 0} \alpha_l p_R(t-\tau_l)$. Let $\chi$ 
denote the peak timing of the aggregate receive pulse, i.e.,
$\chi:=\arg\max_{t}|p_R(t)|$. From
Proposition \ref{PropSecurityPerf}, we can readily
establish the following corollary.

\begin{coro}\label{CoroSecurityPerf}	
For a given estimate of the first path timing $\eta$
and a prescribed upper bound on the false acceptance rate
$\rho$, we set $l_* = \lfloor (\eta+\chi)/T_0\rfloor$ and
$\gamma = \sqrt{2\ln(1/\rho)}$ in Algorithm \ref{algoSecRx}.
Meanwhile, this timing estimate $\eta$ is accepted only when
the tap $l_*$ is validated as in (\ref{validation}).
Accordingly, the probability of false acceptance can be
bounded as
\begin{equation}\label{SecTimingFA}
{\sf Pr}({\sf Accept}~ \eta|\eta <\tau_0 - \chi) 
= {\sf Pr}({\sf Accept}~ \l_*|l_* T_0 < \tau_0)
\le \rho,
\end{equation}
where the probability is with respect to the random STS
sequence under Assumption \ref{Assumption1}.
\end{coro}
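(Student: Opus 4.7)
The plan is to derive the corollary as a direct specialization of Proposition \ref{PropSecurityPerf}. Since the receiver declares $\eta$ accepted exactly when Algorithm \ref{algoSecRx} validates the tap $l_* = \lfloor(\eta+\chi)/T_0\rfloor$, the two events $\{{\sf Accept~}\eta\}$ and $\{{\sf Accept~}l_*\}$ coincide by construction, which justifies the first equality in (\ref{SecTimingFA}) immediately.

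Next I would verify the implication $\eta < \tau_0 - \chi \Rightarrow l_* T_0 < \tau_0$ so that the conditioning events line up. From $\eta + \chi < \tau_0$ we get $(\eta+\chi)/T_0 < \tau_0/T_0$, and since the floor function satisfies $\lfloor x\rfloor \le x$, this yields $l_* \le (\eta+\chi)/T_0 < \tau_0/T_0$, i.e., $l_* T_0 < \tau_0$. Hence the event $\{\eta < \tau_0 - \chi\}$ is contained in $\{l_* < \tau_0/T_0\}$, and conditioning on the former is at least as restrictive as conditioning on the latter when applying Proposition \ref{PropSecurityPerf}.

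Finally, I would invoke Proposition \ref{PropSecurityPerf} with the chosen threshold to obtain $ {\sf Pr}({\sf Accept~}l_*\,|\,l_* T_0 < \tau_0) \le \exp(-\gamma^2/2)$, and then substitute $\gamma = \sqrt{2\ln(1/\rho)}$, which gives $\exp(-\gamma^2/2) = \exp(-\ln(1/\rho)) = \rho$. Combined with the event identification and the inclusion of conditioning events above, this establishes the desired inequality (\ref{SecTimingFA}) and the security claim holds uniformly over all feasible attacks, inheriting that property from Proposition \ref{PropSecurityPerf}.

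Because the corollary is essentially a mechanical specialization, there is no substantive obstacle; the only care needed is at the floor step to ensure strict versus non-strict inequalities are preserved (which they are, since $\eta < \tau_0 - \chi$ is strict) and to confirm that the SYNC-based quantities $\mathring{g}[n]$ and $\mathring{\beta}$ feeding into $l_*$ are independent of the STS sequence, which was already established in the discussion preceding Proposition \ref{PropSecurityPerf}.
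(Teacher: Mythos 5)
Your proposal is correct and follows exactly the route the paper intends: the paper offers no separate proof of Corollary~\ref{CoroSecurityPerf}, stating only that it follows ``readily'' from Proposition~\ref{PropSecurityPerf}, and your specialization (event identification, the floor-function check that $\eta<\tau_0-\chi$ forces $l_*T_0<\tau_0$, and the substitution $\gamma=\sqrt{2\ln(1/\rho)}$ giving $\exp(-\gamma^2/2)=\rho$) is precisely the missing mechanical verification.
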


According to Definition \ref{SRdef}, the upper bounds on the
false acceptance rate in (\ref{FA}) and (\ref{SecTimingFA})
under arbitrary feasible attacks imply that the STS receiver
in Algorithm \ref{algoSecRx} indeed performs secure ranging
with an implementation headroom determined by the pulse
shape, i.e., $\Delta=\arg\max_{t}|p_R(t)|$.

\subsubsection{Uniqueness of UWB-IR}
The fundamental property that enables UWB-IR as an ideal candidate
to realize secure ranging is that the pulse width is in the order
of nanoseconds. This property makes it possible
to realize an implementation headroom as small as several nanoseconds.
In particular, for the precursor-free impulse recommended by IEEE
in \cite{802.15.4z}, the time distance from the starting to the
peak is about $2.5$ ns, i.e., $\arg\max_{t}|p_T(t)|\approx 2.5$ ns.
Meanwhile, the receiver filter impulse response $p_A(t)$ could also
be designed carefully such that the implementation headroom is
minimized. For example, we can limit the headroom to $2.5$ ns as
well, i.e.,
$\chi=\arg\max_{t}|p_R(t)|=\arg\max_{t}|p_T(t)\ast p_A(t)|\approx 2.5$ ns.

\subsubsection{Special Case with $\mathring{\beta}=\infty$}
We can look into a special case
about the saturation in (\ref{saturation}) when the scaling
factor $\mathring{\beta}$ is chosen to be $\infty$. In this
case, the saturated signal in (\ref{saturation}) is either
$+1$ or $-1$. As the tap $l_*$ is earlier than the first path,
it can be seen that the decision metric in (\ref{decMetric})
is a summation of IID zero-mean binary random variables.
The central limit theorem \cite{RandomProcbook} indicates
that the false acceptance rate can be approximated as
\begin{equation}\label{FA2}
	{\sf Pr}({\sf Accept~}l_*|l_*<\tau_0/T_0)
	= {\sf Pr}(T({\bm x})\ge \gamma) \approx  {\cal Q}(\gamma),
\end{equation}
where ${\cal Q}(\cdot)$ denotes the tail distribution function of the
standard normal distribution. It is interesting to recognize that the
bound in (\ref{FA}) is indeed the Chernoff bound on ${\cal Q}(\gamma)$,
i.e., ${\cal Q}(\gamma)\le \exp\left(-\gamma^2/2\right)$. This further
illustrates the general applicability of the result in Proposition
\ref{PropSecurityPerf}.

\subsection{Security under Clock Offset Attack}\label{sSec:ClockOffsetAttk}

Although the trusted ranging devices comply with the $\pm 20$
ppm clock accuracy requirement, an attacker does not need to follow.
In fact, an attacker could manipulate its clock offset and
try to advance the RMARKER arrival timing. Next we will show
how an attacker can advance the RMARKER at the RXD by exploiting a
slower clock as in \cite{2023Usenixsecurity}. 
We will further explain why this is not posing a threat
to the security of the STS receiver in Algorithm \ref{algoSecRx}.

\begin{figure}[t]
	\centering
	\epsfig{file=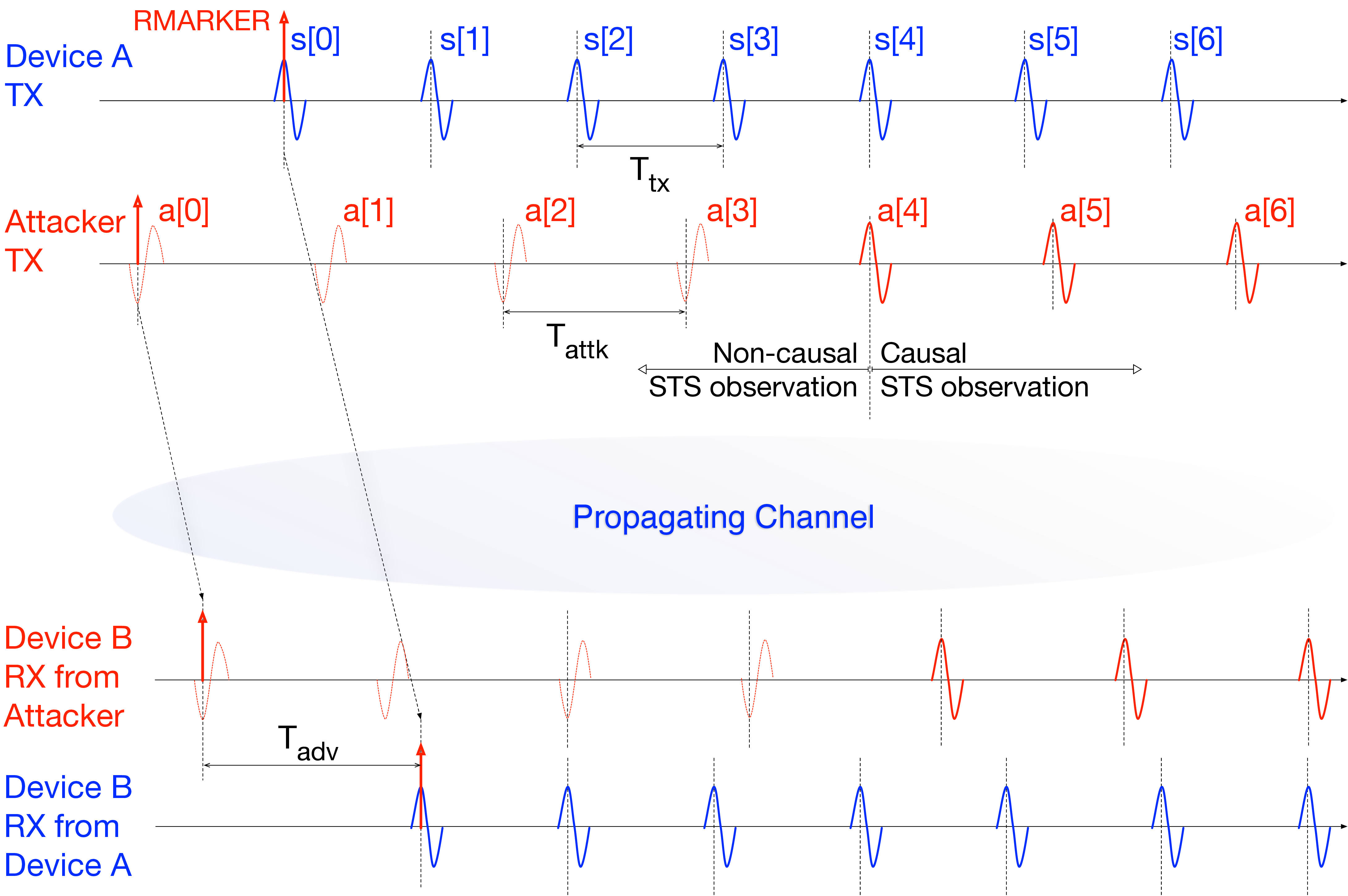, width=0.95\linewidth}
	\caption{Attack via a slower clock. The inter-pulse spacing from
		the attacker is larger than that from Device A, i.e., 
		$T_{\sf attk}> T_{\sf tx}$. The attacker can advance
		the RMARKER location at Device B by transmitting the attack
		sequence earlier than the actual STS sequence from Device A.
		Due to the slower clock at the attacker, the beginning portion
		of the attack sequence is independent to the actual STS while
		the later portion can follow the STS correctly. Specifically,
		when $k\ge 4$, the attacker can follow the STS by listening
		first and then transmitting, i.e., $a(k)=s(k)$, $k\ge 4$.
		In this way, the arrival time of the RMARKER is advanced
		and the corresponding CIR tap could be validated due to the
		fact that the attack sequence matches the STS partially.}  
	\label{fig:SlowClockAttk}
\end{figure}

When the attacker utilizes the same clock frequency as the TXD, our
previous analyses demonstrate that any CIR tap earlier than the true
first path will be rejected subject to a prescribed false acceptance
rate. In other words, the attacker cannot advance the arrival timing
of the RMARKER at the RXD.

On the other hand, the attacker can exploit a clock slower than the
TXD to advance the RMARKER at the RXD. One example of ``Slow Clock Attack''
is shown in Fig. \ref{fig:SlowClockAttk}. Let the inter-pulse spacing
be $T_{\sf attk}$ at the attacker and $T_{\sf tx}$ at the TXD.
For a length-$Q$ STS sequence, we can define $T_{\sf adv, max}$ as
\begin{equation}
T_{\sf adv, max} := Q(T_{\sf attk}- T_{\sf tx}).
\end{equation}
In order for the
attacker to ensure a portion of the attack sequence matches the actual
STS from the legitimate TXD, the amount of timing advance of the
RMARKER will have to be less than $T_{\sf adv, max}$. Note that the
RXD will sync to the attacker's clock and sample the received waveform
in (\ref{rxSig3}) at the clock rate of the attacker when the attacker's
clock ticks differently from that of the TXD. More remarks on
the sampling of the received waveform at the RXD are in Appendix
\ref{AppendixRXDSampling}.

The nominal inter-pulse spacing is $T_{\sf nom} = L T_c$ with
$L$ being the spreading factor. Consider the $\pm 20$ ppm
requirement on the clock accuracy at both the TXD and the
RXD. Meanwhile, we assume that the RXD will reject any input
signal that exhibits a relative clock offset beyond $\Gamma$
ppm with respect to its local clock. Then we can get
$T_{\sf attk}- T_{\sf tx} \le L T_c \cdot (40+\Gamma)/10^{6}$.
Note that we typically have
$\Gamma = 40$ in practical implementations to allow
maximum interoperation between IEEE devices.
Within an STS segment of length $K_{\sf sts}$ in the unit of
$512T_c$, there are $Q = 512\cdot K_{\sf sts}/L$ UWB pulses.
Accordingly, we have
\begin{equation}\label{maxAdv}
T_{\sf adv, max} < K_{\sf sts} \cdot 512 {T_c} \cdot (40+\Gamma)/ 10^{6}.
\end{equation}
We see $T_{\sf adv, max}$ is proportional to $K_{\sf sts}$.
A smaller $K_{\sf sts}$ leads to a smaller $T_{\sf adv, max}$.
When $K_{\sf sts} = 32$ and $\Gamma=40$, we have
$T_{\sf adv, max} < 2.6$ ns.

By allowing extra implementation headroom in (\ref{SecTimingFA}),
we can establish the security of Algorithm \ref{algoSecRx} in the
presence of clock offsets similar to Corollary \ref{CoroSecurityPerf}
as follows.

\begin{coro}\label{CoroClkOffset}	
For a given estimate of the first path timing $\eta$
and a prescribed upper bound on the false acceptance rate
$\rho$, we set $l_* = \lfloor (\eta+\chi)/T_0\rfloor$ and
$\gamma = \sqrt{2\ln(1/\rho)}$ in Algorithm \ref{algoSecRx}.
Meanwhile, this timing estimate $\eta$ is accepted only when
the tap $l_*$ is validated as in (\ref{validation}).
Accordingly, the probability of false acceptance can be
bounded as
\begin{equation}
\begin{split}
&{\sf Pr}({\sf Accept}~ \eta|\eta <\tau_0 - \chi-T_{\sf adv, max})\\
& = {\sf Pr}({\sf Accept}~ \l_*|l_* T_0 < \tau_0-T_{\sf adv, max})
\le \rho,
\end{split}
\end{equation}
where $\chi:=\arg\max_{t}|p_R(t)|$,
$T_{\sf adv, max} := 
K_{\sf sts} \cdot 512 {T_c} \cdot (40+\Gamma)/ 10^{6}$,
and the probability is with respect to the random STS
sequence under Assumption \ref{Assumption1}.
\end{coro}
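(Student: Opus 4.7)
The plan is to follow the same template as Corollary \ref{CoroSecurityPerf}, but carefully account for the slow-clock attack which, according to the discussion around (\ref{maxAdv}), can advance the RMARKER at the RXD by at most $T_{\sf adv, max}$. Concretely, I would set $l_* = \lfloor(\eta+\chi)/T_0\rfloor$ and $\gamma=\sqrt{2\ln(1/\rho)}$ exactly as in Corollary \ref{CoroSecurityPerf}, and then argue that whenever $l_* T_0 < \tau_0 - T_{\sf adv, max}$ the proof of Proposition \ref{PropSecurityPerf} goes through verbatim. The headroom thus splits into two independent contributions: $\chi$ absorbs the time from the pulse onset to its peak, and $T_{\sf adv, max}$ absorbs the worst-case advance an off-specification attacker clock can buy.

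The key intermediate step is to show that, under the slow-clock attack model of Fig.~\ref{fig:SlowClockAttk}, the sample stream $y[n]$ that the RXD associates with a candidate tap $l_*$ with $l_* T_0 < \tau_0 - T_{\sf adv, max}$ remains statistically independent of the future STS symbols $\{s[k]\mid k\ge n\}$. The bound (\ref{maxAdv}) is precisely what guarantees this: any correlation between the attack waveform and the STS can only start appearing in samples that correspond to times later than $\tau_0 - T_{\sf adv, max}$, because before that moment the attacker simply has not yet observed any STS chip from the TXD over the air. Hence for every $n$ that indexes a sample tied to $l_*$ under the attacker-driven sampling grid, the attack contribution is causal with respect to $\{s[k]\mid k<n\}$ in the sense of Definition \ref{FeasibleAttkAlt}, and the legitimate component $y_{\sf legit}[n]$ as well as the CIR/scale estimates $\mathring{g}[\cdot],\mathring{\beta}$ inherited from the SYNC are, by construction in Algorithm \ref{algoSecRx}, independent of the entire STS block.

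Given that independence, the martingale argument of Proposition \ref{PropSecurityPerf} applies unchanged. Specifically, $T_n = \sum_{k=0}^{n-1}x[k]s[k]$ still satisfies ${\sf E}[T_{n+1}\mid T_n,\ldots,T_0]=T_n$ because $x[n]$ depends only on $\{s[k]\mid k<n\}$ and the SYNC-derived statistics, while $|T_{n+1}-T_n|\le 1$ is preserved by the saturation ${\mathbb Q}(\cdot)$. The Azuma-Hoeffding inequality then yields ${\sf Pr}(T({\bm x})\ge\gamma)\le\exp(-\gamma^2/2)=\rho$, which is exactly the false-acceptance bound claimed. Unfolding the definition $l_*=\lfloor(\eta+\chi)/T_0\rfloor$ converts the tap-level statement into the timing-level statement, giving both equalities and the inequality in the corollary.

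The main obstacle will be the independence step when the RXD's sampling clock is driven by the attacker rather than by the TXD, because one must argue that the reparameterized sample indices still line up so that $x[n]$ is a function only of strictly past STS chips. The cleanest way to discharge this is to identify each RXD sample with its physical arrival time and invoke the inequality $T_{\sf attk}-T_{\sf tx}\le L T_c(40+\Gamma)/10^6$ together with the segment length $Q=512 K_{\sf sts}/L$ used to derive (\ref{maxAdv}); this turns "causal in the attacker's index" into "causal in the TXD's chip index" whenever the tap under test lies more than $T_{\sf adv, max}$ before $\tau_0$, and the rest of the argument is then an immediate rerun of Proposition \ref{PropSecurityPerf} and Corollary \ref{CoroSecurityPerf}.
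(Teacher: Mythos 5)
Your proposal is correct and follows essentially the same route as the paper, which establishes Corollary \ref{CoroClkOffset} by simply enlarging the implementation headroom of Corollary \ref{CoroSecurityPerf} by the worst-case advance $T_{\sf adv, max}$ from (\ref{maxAdv}) and then rerunning the martingale/Azuma--Hoeffding argument of Proposition \ref{PropSecurityPerf}. Your added detail on why the condition $l_* T_0 < \tau_0 - T_{\sf adv, max}$ forces every attack sample to remain independent of $\{s[k]\mid k\ge n\}$ under the attacker-driven sampling grid is a faithful elaboration of the step the paper leaves implicit.
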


Considering the results in Corollary \ref{CoroSecurityPerf} and
Corollary \ref{CoroClkOffset}, for the
sake of clarity in exposition, we will assume the clock at the
attacker is ideal as both the TXD and the RXD in this paper
unless otherwise specified.

\subsection{Detection Performance}\label{sSec_DetPerf}

We next check how well the STS receiver design in Algorithm
\ref{algoSecRx} performs when the CIR tap $l_*$ corresponds to the
true first path in (\ref{yn}). After the interference cancellation
step in Algorithm \ref{algoSecRx}, we have
\begin{eqnarray}
\tilde{x}[n] = g[l_*]s[n] + \tilde{w}[n],
\label{H1}
\end{eqnarray}
where $\tilde{w}[n]$ models both the additive noise and
the residual interference from other taps due to non-ideal
cancellation. 

After the saturation step in (\ref{saturation}), we have
\begin{equation}\label{saturatedSig}
\begin{split}
x[n] 
& = {\mathbb Q}\left({\sf Re}\left\{
	\mathring{\beta} \cdot \mathring{g}[l_*]^{\dagger}\cdot \tilde{x}[n]
	\right\}
	\right)\\
&= {\mathbb Q}\left({\sf Re}\left\{
\mathring{\beta} \cdot \mathring{g}[l_*]^{\dagger} (g[l_*] s[n]+ \tilde{w}[n])
\right\}
\right)\\
&= {\mathbb Q}\left( \bar{h} s[n] + \bar{w}[n] \right)
\end{split},
\end{equation}
where notation $(\cdot)^{\dagger}$ denotes the conjugate operation,
$\bar{h} := {\sf Re}\{\mathring{\beta} \cdot 
\mathring{g}[l_*]^{\dagger} g[l_*] \}$, and 
$\bar{w}[n] := {\sf Re}\{\mathring{\beta} \cdot 
\mathring{g}[l_*]^{\dagger} \tilde{w}[n]\}$.
By defining ${\mathbb M}(s[n]):= {\sf E}[x[n]|s[n]]$, we have
\begin{equation}
x[n] = {\mathbb M}(s[n]) + e[n], \label{SigDecompose}
\end{equation}
where ${\mathbb M}(s[n])$ is the conditional mean estimate of $x[n]$
with $s[n]$ and $e[n]:=x[n] - {\mathbb M}(s[n])$ stands for the
estimation error with a zero mean, i.e.,
\begin{equation}
{\sf E}[e[n]] = {\sf E}[x[n]] - {\sf E}\big[{\sf E}[x[n]\big|s[n]]\big] = 0. 
\end{equation}
Further, it can be shown that $s[n]$ and  $e[n]$ is uncorrelated, i.e.,
\begin{equation}\label{uncorrelatedError}
\begin{split}
&{\sf E}[s[n]e[n]]
={\sf E}\Big[ 
{\sf E}\big[ s[n] \left(x[n]- {\sf E}\big[x[n]\big|s[n]\big]\right) 
\big| s[n] \big] \Big]\\
&={\sf E}\Big[ s[n]{\sf E}\big[x[n]\big| s[n]\big] 
             - s[n]{\sf E}\big[x[n]\big|s[n]\big] \Big] = 0
\end{split}.
\end{equation}

From the decomposition in (\ref{SigDecompose}), we can establish
the following proposition to characterize the detection performance
of the STS receiver in Algorithm \ref{algoSecRx}.

\begin{pro}\label{PropDetperf}
Assume the estimation error $e[n]$ in (\ref{SigDecompose}) is
uncorrelated with $s[n]$ conditioned on the past samples
$\{x[k]\}_{k<n}$ and the past STS $\{s[k]\}_{k<n}$. When validating
a CIR tap $l_*$ corresponding to the true first path, the miss rate
of the reference receiver in Algorithm \ref{algoSecRx} can be upper
bounded as
\begin{equation}\label{MissProb}
\begin{split}
{\sf Pr}({\sf Reject ~}l_*) &= {\sf Pr}(T({\bm x})< \gamma) \\
&\le \exp\big(
-{Q\big(\bar{C}-{\gamma}\big/{\sqrt{Q}}\big)^2}\big/{2}
\big),
\end{split}
\end{equation}
where 
$C_n:= {\sf E}[{\mathbb M}(s[n]) s[n]]$,
$\bar{C} := \sum_{n=0}^{Q-1}C_n/Q$, and
$\bar{C}\sqrt{Q}$ is large enough such that $\bar{C}>\gamma/\sqrt{Q}$.
The probability in (\ref{MissProb}) is with respect to the random STS
under Assumption \ref{Assumption1}.
\end{pro}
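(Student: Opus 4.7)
The plan is to reproduce the martingale / Azuma--Hoeffding template used in the proof of Proposition \ref{PropSecurityPerf}, but now centered so as to expose the non-zero drift that arises because tap $l_*$ carries a true signal component ${\mathbb M}(s[n])$.

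First, I would rewrite $T({\bm x}) = T_Q/\sqrt{Q}$ with $T_n := \sum_{k=0}^{n-1} x[k]s[k]$, and introduce the centered process $M_n := T_n - \sum_{k=0}^{n-1} C_k$ adapted to the filtration $\mathcal{F}_{n-1} := \sigma(\{s[j], x[j]\}_{j<n})$. Using the decomposition $x[n] = {\mathbb M}(s[n])+e[n]$ from (\ref{SigDecompose}), I split
\[
{\sf E}[x[n]s[n] \,|\, \mathcal{F}_{n-1}]
= {\sf E}[{\mathbb M}(s[n])s[n]\,|\,\mathcal{F}_{n-1}]
+ {\sf E}[e[n]s[n]\,|\,\mathcal{F}_{n-1}].
\]
The first summand equals $C_n$ because ${\mathbb M}(s[n])s[n]$ is a function of $s[n]$ alone, and $s[n]$ is independent of $\mathcal{F}_{n-1}$ by Assumption \ref{Assumption1}. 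The second summand vanishes by the stated conditional uncorrelatedness, combined with ${\sf E}[s[n]\,|\,\mathcal{F}_{n-1}] = 0$. Hence $\{M_n\}$ is a martingale.

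Next, the saturation in (\ref{funcQ}) forces $x[n]\in[-1,+1]$, so each centered increment $M_{n+1}-M_n = x[n]s[n]-C_n$ lies almost surely in a deterministic interval of width at most $2$. The bounded-differences form of Azuma--Hoeffding recalled in Appendix \ref{AppendixHoeffding} then gives, for every $t>0$,
\[
{\sf Pr}(M_Q \le -t) \le \exp\big(-t^2/(2Q)\big).
\]
The event $\{T({\bm x})<\gamma\}$ is the same as $\{M_Q < \gamma\sqrt{Q}-Q\bar{C}\}$, i.e.\ $\{M_Q<-t\}$ with $t = \sqrt{Q}(\sqrt{Q}\bar{C}-\gamma) > 0$ under the standing hypothesis $\bar{C}>\gamma/\sqrt{Q}$. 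Substituting this $t$ gives $t^2/(2Q) = Q(\bar{C}-\gamma/\sqrt{Q})^2/2$, which is precisely (\ref{MissProb}).

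The main obstacle is the martingale verification. One has to argue carefully that the stated conditional uncorrelatedness of $e[n]$ and $s[n]$ upgrades, via the zero-mean IID property of the STS, to ${\sf E}[e[n]s[n]\,|\,\mathcal{F}_{n-1}]=0$, and that ${\mathbb M}(s[n])s[n]$ is $\sigma(s[n])$-measurable so its conditional expectation collapses to the unconditional value $C_n$. Once this is pinned down, the remainder is mechanical: increments of width $2$ feed directly into Azuma, and the only new bookkeeping relative to Proposition \ref{PropSecurityPerf} is absorbing the drift $\sum_k C_k = Q\bar{C}$ into the threshold.
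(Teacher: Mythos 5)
Your proposal is correct and follows essentially the same route as the paper's proof in Appendix \ref{AppendixDetPerf}: the paper centers via $G_n=\sum_{k<n}({\sf E}[x[k]s[k]]-x[k]s[k])$, which is exactly $-M_n$ since ${\sf E}[x[k]s[k]]=C_k$ by (\ref{uncorrelatedError}), verifies the martingale property using the same two facts (independence of $s[n]$ from the past and the assumed conditional uncorrelatedness of $e[n]$ and $s[n]$), bounds the increments in an interval of width $2$, and applies Azuma--Hoeffding with the identical threshold translation. The only cosmetic differences are the sign convention and your use of the explicit filtration $\sigma(\{s[j],x[j]\}_{j<n})$ rather than the filtration generated by the $G_n$'s.
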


Appendix \ref{AppendixDetPerf} shows the proof outline.
Note the condition $\bar{C}>\gamma/\sqrt{Q}$ can be easily met when
the STS length $Q$ gets large and ${\sf E}[{\mathbb M}(s[n])s[n]]>0$.
The upper bound in (\ref{MissProb}) indicates the STS receiver
in Algorithm \ref{algoSecRx} achieves optimal asymptotic detection
performance in the sense that the miss rate decreases to zero
exponentially with respect to the STS length. Also note that the
correlation value $C_n= {\sf E}[{\mathbb M}(s[n]) s[n]]$ can be maximized
by optimizing the scaling parameter $\mathring{\beta}$ in Algorithm
\ref{algoSecRx}.

As shown in (\ref{uncorrelatedError}), the estimation error $e[n]$ is
uncorrelated with $s[n]$. Besides, the estimation error $e[n]$ and the
current STS sample $s[n]$ become uncorrelated conditioned on
$\{x[k]\}_{k<n}$ and $\{s[k]\}_{k<n}$ when the residual term
$\tilde{w}[n]$ in (\ref{H1}) is independent of the past observations
and STS. This is the case when the interference cancellation
in Algorithm \ref{algoSecRx} is perfect. The detection performance
of one practical implementation with non-ideal interference
cancellation will be characterized numerically in Section \ref{SecSim}.

\subsection{More Discussions}\label{sSec_moreDiscuss}

Proposition \ref{PropSecurityPerf} ensures the security of the reference
design in Algorithm \ref{algoSecRx} and Proposition \ref{PropDetperf}
guarantees the detection performance. Compared to the STS receiver
based on the simple idea of CIR tap thresholding as reviewed in Section
\ref{SecCIRThr}, the key factors enabling the security of the reference
design can be highlighted as follows.
\begin{itemize}
\item When canceling the interference from other CIR taps, it is
important to rely on the tap estimates that only utilize the signal
arriving earlier than the STS. This is why Algorithm
\ref{algoSecRx} employs the estimated CIR from the SYNC in
the packet instead of the STS itself. On the other hand, when
canceling the interference on a particular tap with the CIR
estimated from the STS, an adversary could exploit this fact and
attack the ranging as illustrated in Algorithm \ref{algoAdapAttk};

\item When canceling the interference from other CIR taps, it is
important to only cancel the taps later than the tap being validated.
This is why we only cancel the taps at $l_*+\zeta M$ with $\zeta>0$
in (\ref{intfCancel}). Meanwhile, it is worthwhile to point out that
the security of the reference design in Algorithm \ref{algoSecRx}
does not necessitate cancellation of all the interfering taps.
However, a cleaner signal after cancellation in (\ref{intfCancel})
will lead to improved detection performance;
	
\item The saturation operation in (\ref{saturation}) is also critical
in establishing the upper bound on the false acceptance rate as in
(\ref{FA}) under arbitrary feasible attacks.
The contribution from each individual signal sample towards the decision
metric in (\ref{decMetric}) is also restricted by this saturation.
Thus the attacker can not achieve extra gains by putting more emphasis
on particular signal samples.

\end{itemize}

In Algorithm \ref{algoSecRx}, we have utilized a particular saturation
function as defined in (\ref{funcQ}), which allows efficient hardware
implementation. Note that a different saturation function, e.g., the
logistic sigmoid \cite{MachineLearningbook}, can be employed in the
step of ``Saturation'' without undermining the security of the STS
receiver in Algorithm \ref{algoSecRx}.

Also note that the discrete samples in (\ref{rxSig3})
are actually the outputs of the analog-to-digital converter (ADC) at
the receiver. Since the ADC outputs are confined within an interval,
the post-cancellation signal in (\ref{intfCancel}) is thus already
bounded even without the saturation function ${\mathbb Q}(\cdot)$ in
(\ref{funcQ}). Accordingly, the security and the upper bound on the
false acceptance rate can be established similarly as in Proposition
\ref{PropSecurityPerf}. Nevertheless, the ``Saturation'' step
in Algorithm \ref{algoSecRx} enables optimization of the detection
performance.

\section{Numerical Performance Evaluations}\label{SecSim}

In this section, we corroborate the designs presented in previous
sections numerically. In particular, we simulate one HPRF system
as illustrated in Fig. \ref{fig:sysModel} with the following
configurations:
\begin{itemize}
\item STS packet format: STS Packet Config 3 in Fig. \ref{fig:stsPktFormats};
	
\item STS configuration: one STS segment of length ${K_{\sf sts}}=64$ with
spreading factor: $L=4$. The STS sequence length is $Q=8192$;

\item Sampling rate: $T_0=T_c/2$ in (\ref{rxSig3}), which corresponds
to twice of the chip rate in 4z HRP;

\item  Receive pulse $p_R(t)$ in (\ref{aggPulse}): modeled as the precursor-free
pulse plotted in Fig. \ref{fig:minPhaseSRRC} in Appendix \ref{AppendixPCF};

\item CIR from legitimate TXD to RXD: the channel between
legitimate TXD and RXD is modeled as one multi-path channel with
the first path corresponding to tap $126$ of the length-$256$
CIR $g[n]$ in (\ref{rxSig3}), i.e., 
$g[n] = \alpha_0 p_R[n-\bar{\tau}_{0}] + 
\sum_{l=1}^{n_{\sf Intf}} \alpha_l p_R[n-\bar{\tau}_l]$,
where $\alpha_0=1$, $\bar{\tau}_{0} = 126$, and $n_{\sf Intf}$
denotes the number of interfering taps;

\item Additive noise: the noise $w[n]$ in (\ref{rxSig3})
is modeled as AWGN with the variance dictating the
signal-to-noise (SNR) ratio at the RXD;

\item CIR from attacker to RXD: the channel between the
attacker and RXD is modeled as a single tap with $\theta=1$
and $\bar{\tau}_{0}=126$ in (\ref{rxSigAttk2});

\item Timing and frequency synchronization: we assume perfect
timing and frequency synchronization at the TXD, the RXD, and
the attacker;

\item Scaling parameter in the reference STS receiver: we will
assume the scaling parameter $\mathring{\beta}$ in Algorithm
\ref{algoSecRx} is set to be $\infty$. The saturated signal in
(\ref{saturation}) is thus simply the sign of the input argument.
\end{itemize}

\subsection{Effectiveness of Attack in Algorithm \ref{algoAdapAttk}}
\label{SubSec_SimAdapAttk}

First, we examine the effectiveness of the adaptive attack scheme
in Algorithm \ref{algoAdapAttk}. We neglect the additive noise and
assume the legitimate signal is overwhelmed by the attack one.
Accordingly, the received signal at the RXD only consists of the
adaptive attack signal as in (\ref{rxSigAttk2}). The RXD recovers
an CIR estimate of length $256$ from the received attack
signal according to (\ref{slcCIR}) in Proposition \ref{PropCIREst}.
Fig. \ref{fig:Lambda2H1} and \ref{fig:Lambda2H15}
depict the statistics about the estimated CIRs over $1000$ random
STS realizations under different settings of $\Lambda$ and $H$ in
Algorithm \ref{algoAdapAttk}. From the statistics, we can see that
the adaptive attack is indeed very effective in creating a peak
tap at $112$. Since $\Lambda=2$ and $M=8$, this peak tap is
$\Lambda M = 16$ taps before the peak tap corresponding to the actual
first path at $128$, which is as predicted in Proposition
\ref{PropAdapAttk}. Meanwhile, we can also see that the amplitude
of the early tap at $112$ indeed gets larger as we set $H$ to a
higher value in Algorithm \ref{algoAdapAttk}.

\begin{figure}[t]
	\centering
	\epsfig{file=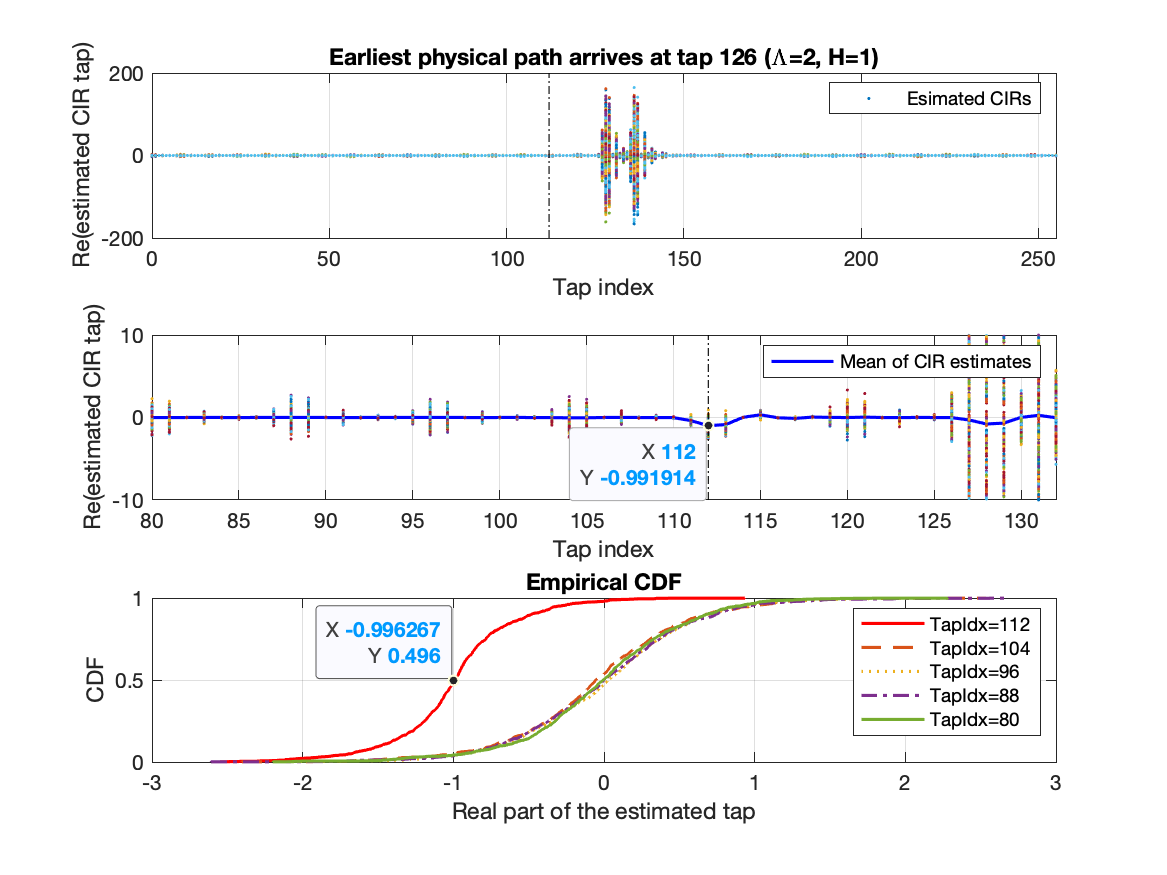, width=0.88\linewidth}
	\caption{Statistics of CIR estimates under adaptive attacks with
		$\Lambda=2$ and $H=1$.}
	\label{fig:Lambda2H1}
\end{figure}

\begin{figure}[t]
	\centering
	\epsfig{file=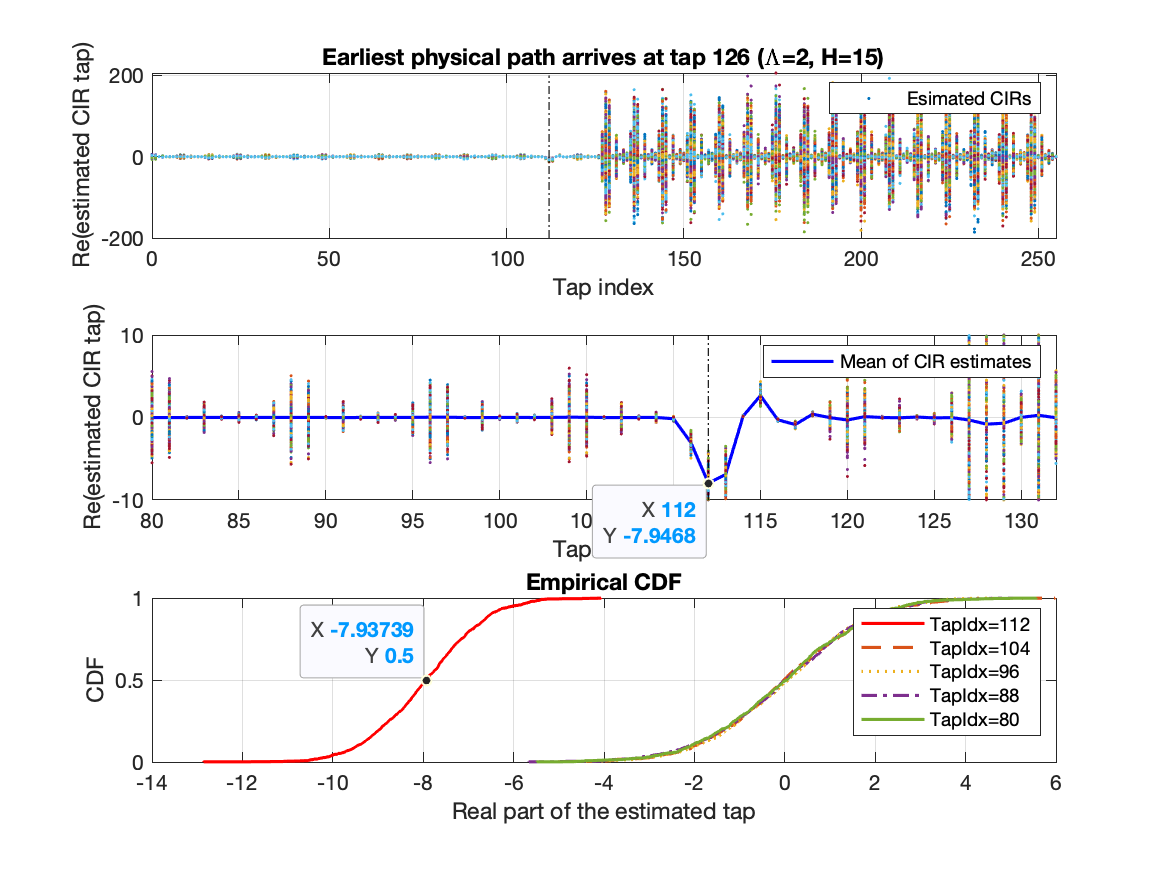, width=0.88\linewidth}
	\caption{Statistics of CIR estimates under adaptive attacks with
		$\Lambda=2$ and $H=15$.}
	\label{fig:Lambda2H15}
\end{figure}

\begin{figure}[t]
	\centering
	\epsfig{file=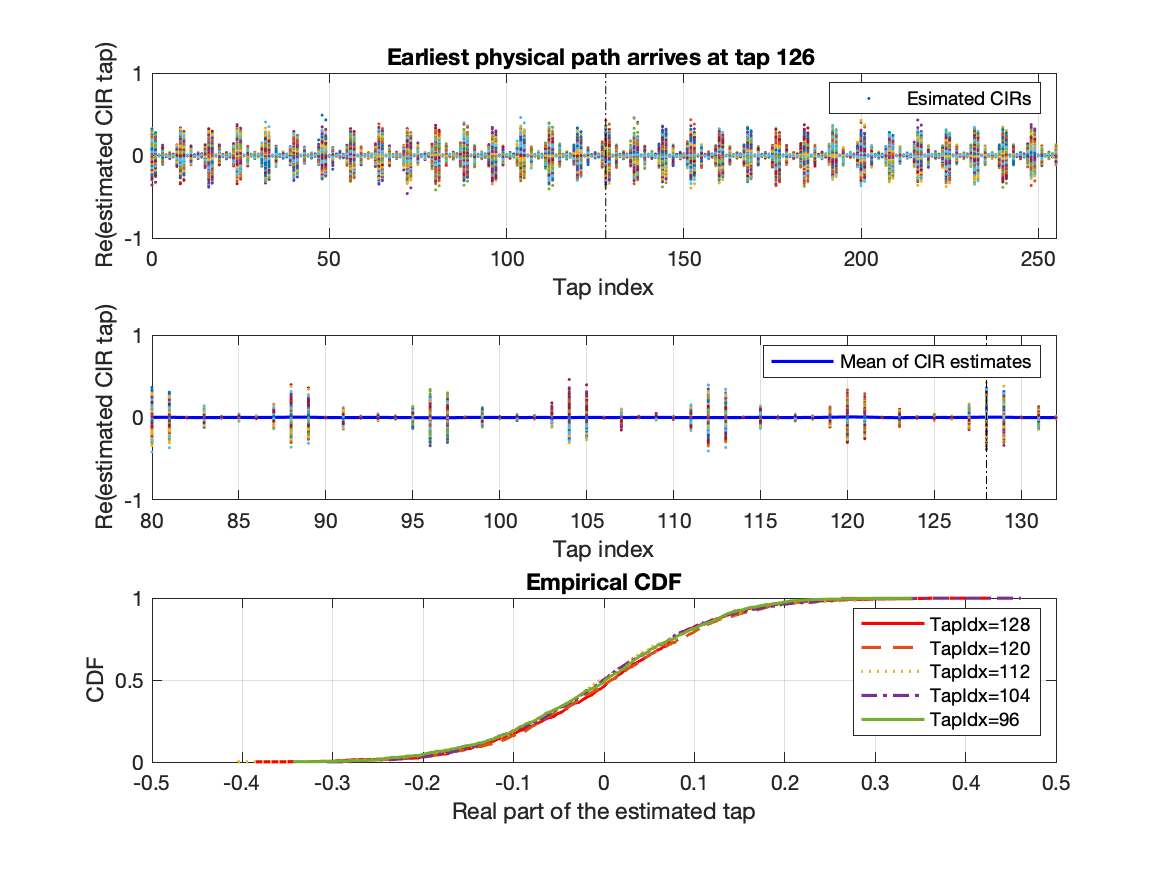, width=0.88\linewidth}
	\caption{Statistics of CIR estimates under ``ghost peak'' attacks
		in \cite{2022GhostPeak}.}
	\label{fig:GhostAttk}
\end{figure}

In comparison, Fig. \ref{fig:GhostAttk} shows the statistics of the
CIR estimates when the attacker performs the ``ghost peak'' attacks
as in \cite{2022GhostPeak} by transmitting independent random binary
sequences. Specifically, the sequence $\{a[k]\}$ in (\ref{rxSigAttk})
is a white Bernoulli process with each sample taking the value of $+10$
or $-10$ with equal chances. Fig. \ref{fig:GhostAttk} shows that
the ``ghost peak'' attacks can not ensure an early CIR peak at
a prescribed location.

From the above presented results, ones can see that the adaptive
attack scheme in Algorithm \ref{algoAdapAttk} is effective in
generating a fake peak at a specific location in the CIR. On the
one hand, it implies those STS receivers that simply compare the
amplitudes of the CIR taps against certain thresholds are prone to
such attacks. Meanwhile, this also suggests there is no security
guarantee to validate one particular CIR tap by comparing
the CIR estimate from the SYNC in the packet with the one
recovered from the STS portion.

\subsection{Security of STS Receiver in Algorithm \ref{algoSecRx}}

Next, we examine the security performance of the reference STS receiver
in Algorithm \ref{algoSecRx}. In the absence of attacks, the RXD only
receives the legitimate transmission from the TXD as in (\ref{rxSig3}).
Fig. \ref{fig:CIREsts_DiffSNR} shows instances of the estimated
CIRs from the received STS signals under different SNR levels.
Further, Fig. \ref{fig:SR_EarlyPath_DiffSNR} shows the statistics of
the decision metric in (\ref{decMetric}) when validating CIR taps
earlier than the physical first path in the absence of attacks. From the
results, we see that the decision metric follows the Gaussian distribution
with zero mean and unit variance no matter how strong the noise gets.
The desired false acceptance rate of $\rho$ can be realized by
configuring the detection threshold in (\ref{validation}) as 
$\gamma = \mathcal{Q}^{-1}(\rho)$.

In the case of the adaptive attack in Algorithm \ref{algoAdapAttk},
we also employ the reference STS receiver in Algorithm
\ref{algoSecRx} to validate CIR taps earlier than the true
physical path. In particular, by setting $\Lambda=2$ and
$H=15$ as in Fig. \ref{fig:Lambda2H15}, the statistics of
the decision metric in (\ref{decMetric}) in validating
an early CIR tap are shown in Fig. \ref{fig:SR_FakePath_DiffSNR}.
Similarly, Fig. \ref{fig:SR_FakePath_DiffSNR} also shows the statistics
of the decision metric in (\ref{decMetric}) in the case of
``ghost peak'' attacks.

From the above results, we can see that the decision metric
in (\ref{decMetric}) follows the standard normal distribution
asymptotically whenever validating non-existing fake taps.
Accordingly, the false acceptance probability is always
upper bounded as in Proposition \ref{PropSecurityPerf}.

\begin{figure}[t]
\centering
\epsfig{file=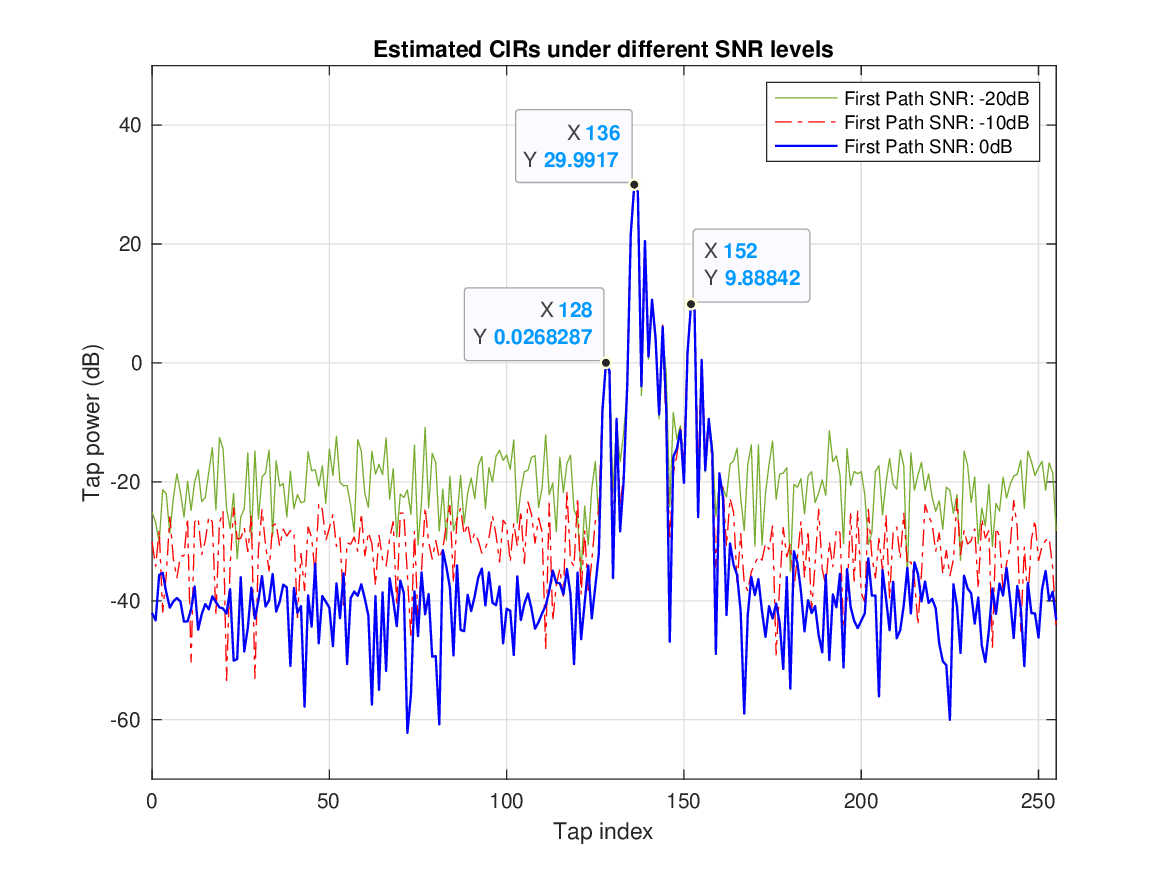, width=0.88\linewidth}
\caption{CIR estimates from the STS signals under different
SNRs without attacks ($n_{\sf Intf}=3$,
$(\alpha_0, \bar{\tau}_0)=(0{\rm dB}, 126)$
$(\alpha_1, \bar{\tau}_1)=(30{\rm dB}, 134)$,
$(\alpha_2, \bar{\tau}_2)=(0{\rm dB}, 142)$,
$(\alpha_3, \bar{\tau}_3)=(10{\rm dB}, 150)$).}
\label{fig:CIREsts_DiffSNR}
\end{figure}

\begin{figure}[t]
	\centering
	\epsfig{file=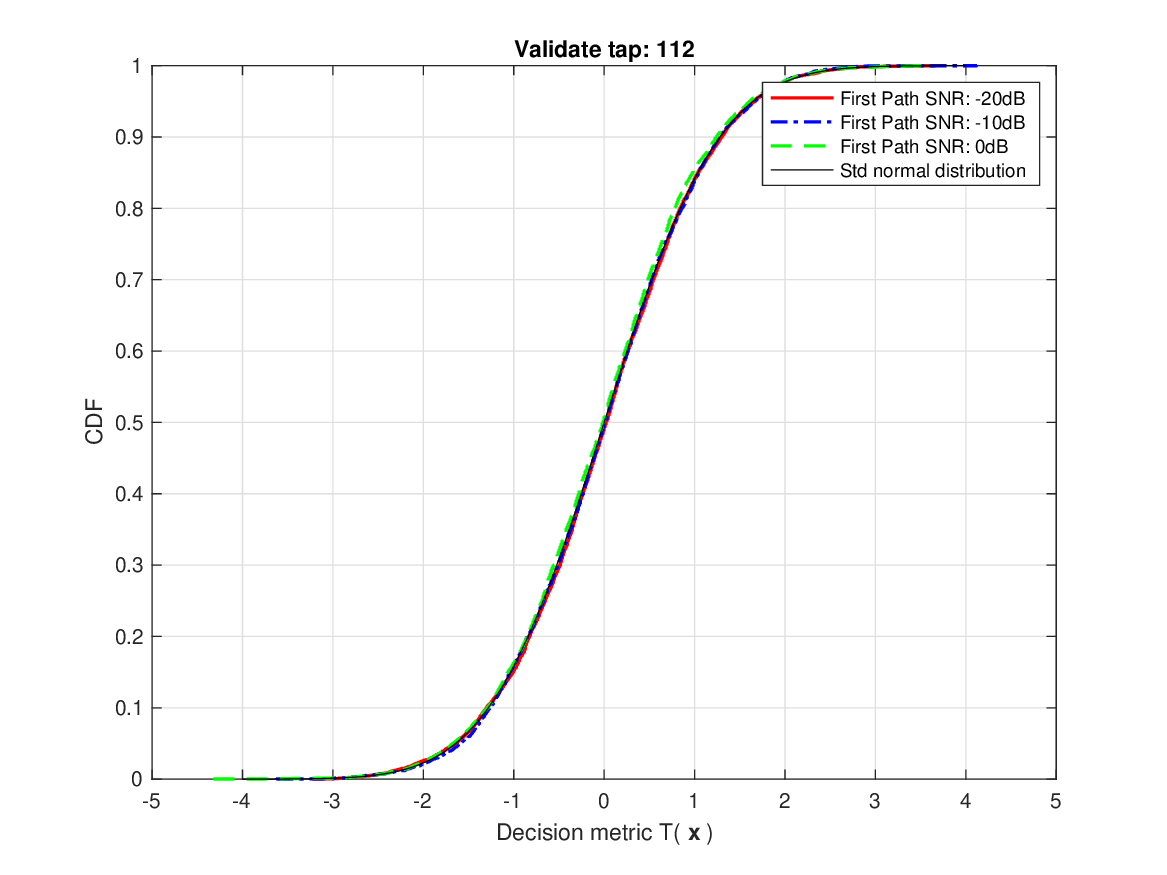, width=0.88\linewidth}
	\caption{CDFs of the decision metric when validating the
		CIR tap $112$ in the absence of attacks.}
	\label{fig:SR_EarlyPath_DiffSNR}
\end{figure}

\begin{figure}[t]
	\centering
	\epsfig{file=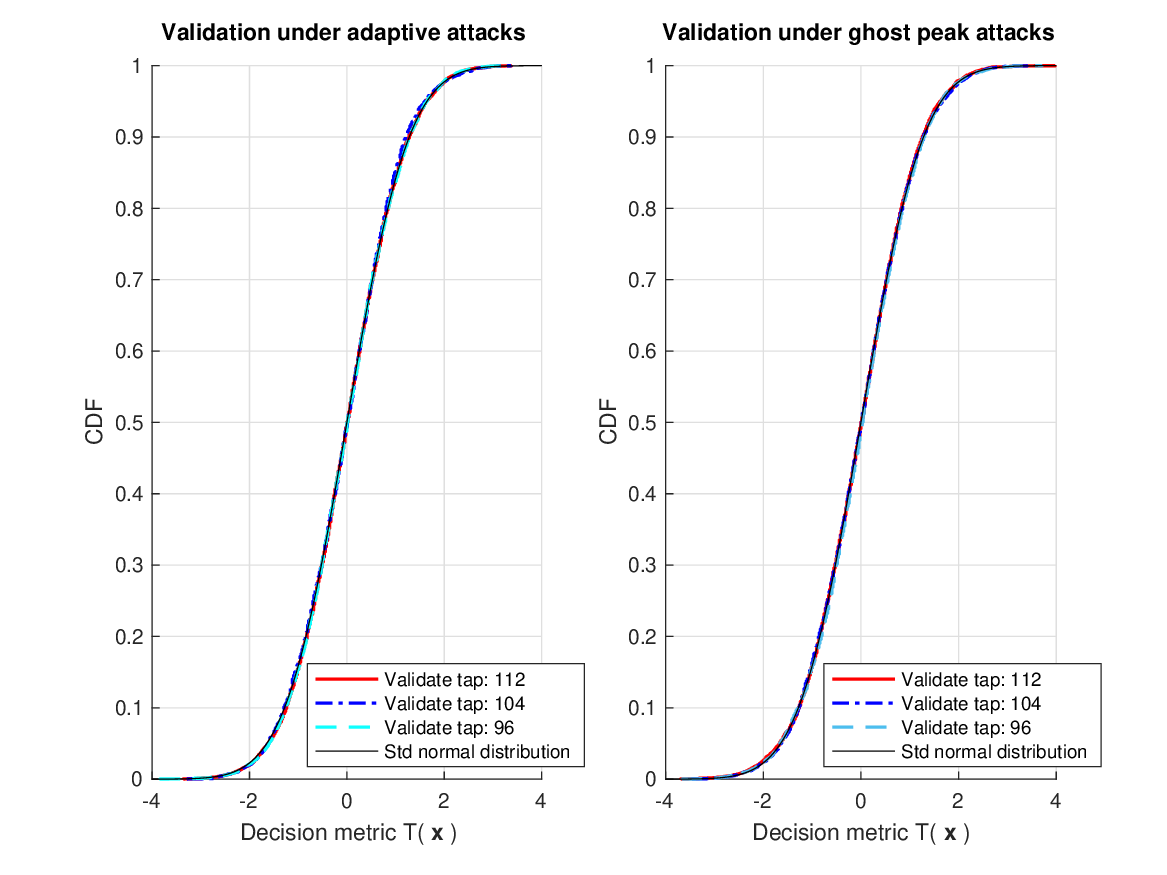, width=0.88\linewidth}
	\caption{CDFs of the decision metric when validating different
		CIR taps in the case of attacks. Left: adaptive attacks with
		$\Lambda=2$ and $H=15$. Right: ``ghost peak'' attacks.}
	\label{fig:SR_FakePath_DiffSNR}
\end{figure}

\subsection{Detection Performance of STS Receiver in Algorithm \ref{algoSecRx}}

In this section, we examine the detection performance of the reference
STS receiver in Algorithm \ref{algoSecRx} in the absence of attacks. 
In particular, the received signal at the RXD consists of the
legitimate STS waveform from the TXD and the AWGN as in (\ref{rxSig3}).
With the same CIR configuration as in Fig. \ref{fig:CIREsts_DiffSNR},
we next present the detection performance when validating the tap
of $l_*=128$, which corresponds to the pulse peak on the first path.
Note that we have $\arg\max_n |p_R[n]|=2$ and $\bar{\tau}_0=126$.
Fig. \ref{fig:SR_Det_Stat} shows the detection
performance of the first path when the detection threshold
in (\ref{validation}) is set to $\gamma = {\mathcal Q}^{-1}(10^{-6})$ to
ensure the false acceptance rate at $10^{-6}$. Note that the inter-tap
interference cancellation in (\ref{intfCancel}) may be non-ideal
either due to the errors in the estimates of the interference taps or
simply because we do not cancel all of the interference taps.
Fig. \ref{fig:SR_Det_Stat} also depicts the detection performance
with different amount of residual inter-tap interference. The results
indicate that the detection performance does not degrade much without
canceling those interference taps that are less than $10$dB with
respect to the first path.

Fig. \ref{fig:SR_Det_Stat} also shows the corresponding detection
performance when targeting different false acceptance rates, i.e.,
$\rho=10^{-6}$, $2^{-30}$, and $2^{-48}$.
To gain a higher level of security with
a smaller false acceptance rate, we need to set a higher detection
threshold in Algorithm \ref{algoSecRx}. The detection performance
thus degrades accordingly. Note that the target false rates of
$10^{-6}$ and $2^{-48}$ correspond to the medium and high security
levels in FiRa certification \cite{FiRaCert} respectively.
While ensuing the false acceptance rate of $2^{-48}$, Fig.
\ref{fig:SR_Det_Stat} shows that the reference STS receiver
in Algorithm \ref{algoSecRx} can detect the first path with a
probability more than $99\%$ as long as the SNR of the first
path is above $-20$dB.

\begin{figure}[t]
\centering
\epsfig{file=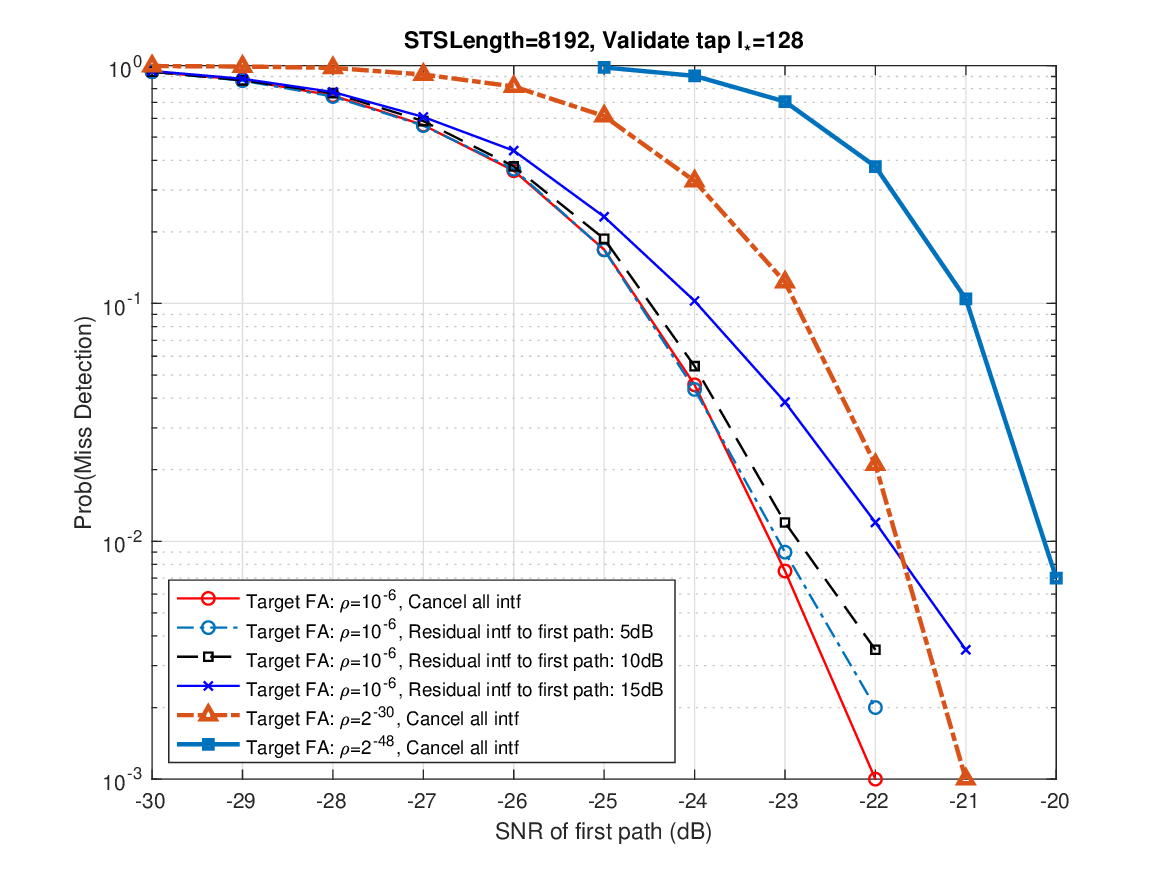, width=0.88\linewidth}
\caption{Miss detection performance with a target false acceptance rate
of $\rho = 10^{-6}, 2^{-30}, 2^{-48}$.
``Cancel all intf'': all the interference from
taps at $136$, $144$, and $152$ are canceled; 
``Residual intf to first path: $x$dB'': the amount of interference after
cancellation is $x$ dB over the first path power.}
\label{fig:SR_Det_Stat}
\end{figure}

\section{Conclusions}\label{SecConc}

In this paper, we have presented a reference STS receiver and demonstrated
that secure ranging can be achieved by employing the STS waveform in 4z
HRP. We have also characterized the performance bounds of the reference
secure STS receiver.

On one hand, the results in this paper address all the concerns about
the security of 4z HRP mode. On the other hand, the reference STS receiver
and the highlighted design principles in this paper will fortify the
foundation for all the use cases building on secure ranging with 4z
HRP UWB.

\bibliographystyle{IEEEtran}
\bibliography{SecRangingReflib}

\appendices
\section{Key Notations, Symbols, and Acronyms}\label{AppendixNotations}
All the key notations and symbols adopted in this paper are listed
in Table \ref{tab:SymbPaper} together with corresponding definitions.
The acronyms used in this paper are listed in Table \ref{tab:Acronyms}.

\begin{table}[htp!]
	\centering
	\caption{List of Symbols and Notations}
	\begin{tabular}{l|l}
		\hline
		\textbf{Symbol} & \textbf{Definition (Typical Values)}\\
		\hline
		$T_c$ & Chip duration ~($=1/499.2$ $\mu$s, $\approx 2$ ns)  \\
		$T_0$ & Sampling period ~($T_0 = T_c/\Omega$, $\Omega\ge 2$) \\
		$\{s[k]\}_{k=0}^{Q-1}$ & Length-$Q$ STS sequence ~($s[k]=\pm 1$, $Q=8192$) \\
		$L$ & Spreading factor ~($L\in\{4, 8\}$) \\
		$s(t)$ & Transmitted STS waveform in the baseband \\
		& from TXD \\
		$T_{\sf sts}$ & Length of one STS segment\\
		$K_{\sf sts}$ & Length of one STS segment in the unit of $512T_c$,\\
		& i.e., $T_{\sf sts} = K_{\sf sts}\times (512T_c)$ ~($K_{\sf sts}=64$)\\
		$p_T(t)$ & Transmit impulse in the baseband\\		
		$p_A(t)$ & Receiver analog front-end filter impulse response \\
		$p_R(t)$ & Aggregate receive pulse shape without channel\\
		$\chi$ & Peak timing of the receive pulse, i.e., \\
			   & $\chi = \arg\max_t |p_P(t)|$~ ($2.5$ ns)\\		
		$h(t)$ & Multi-path channel model\\
		$\alpha_l$ & The complex magnitude of the $l$-th path in $h(t)$ \\
		$\tau_l$ & The time delay of the $l$-th path in $h(t)$ \\
		$g(t); g[n] $ & Aggregate pulse shape with channel, a.k.a. CIR; \\
			          & $g[n]=g(nT_0)$\\
		$r(t); r[n]$ & Received signal at RXD; $r[n]=r(nT_0)$ \\
		$\eta$ & Estimate of the arrival timing of the first path \\
		$\Delta$ & Allowed implementation headroom in secure ranging \\
		$\rho$ & Prescribed upper bound on the false acceptance rate \\
		& ($\rho=2^{-48}$)\\
		$T_f;\hat{T}_f$ & ToF between ranging devices; the ToF estimate\\
		$\Lambda$ & A positive integer-valued step size of empirical \\
		& prediction in adaptive STS attack ($\Lambda\in {\mathbb{Z}^+}$)\\
		$H$ & Number of additional history samples used for \\
		& empirical prediction ($H\in {\mathbb{Z}^+} \cup \{0\}$)\\
		$\{a[k]\}_{k=0}^{Q-1}$ & Attack symbols in adaptive STS attack\\
		$a(t)$ & Waveform of adaptive STS attack\\
		$\theta$ & Path gain of the channel from the attacker to RXD\\
		$\bar{\tau}_l$ & The discrete value of $\tau_l$ when
		$\tau_l = \bar{\tau}_l T_0$\\
		$\{\mathring{g}[n]\}_{n=0}^{JM-1}$ & CIR estimate with the SYNC
		in the current\\
		& STS packet \\
		$\mathring{\beta}>0$ & Scaling parameter configured according to the\\
		& received signal power level over SYNC \\
		$l_*$ & Index of the tap to be validated in the CIR $\mathring{g}[n]$\\
		$T(\bm x)$ & Decision metric to validate a given CIR tap\\
		$\gamma$ & Detection threshold to accept a given CIR tap\\
		$\Gamma$ & Maximum relative clock offset in ppm allowed by \\
		&RXD ($\Gamma = 40$ ppm)\\
		${\cal Q}(\cdot)$ & Tail distribution function of the standard normal\\
		& distribution, a.k.a. Q-function\\
		${\sf Pr}(\cdot)$ & Probability of the specified event\\
		${\sf E}[\cdot]$ & Expectation of the specified random variable\\
		$ {\bm A}(i,j)$ & The element of matrix $\bm A$ at the intersection of \\
		& the $i$-th row and the $j$-th column\\
		$ {\bm A}(i,:)$ & The $i$-th row vector of matrix $\bm A$ \\
		$ {\bm A}(:,j)$ & The $j$-th column vector of matrix $\bm A$ \\	
		\hline
	\end{tabular}
	\label{tab:SymbPaper}
\end{table}

\begin{table}[htp!]
	\centering
	\caption{List of Acronyms}
	\begin{tabular}{l|l}
		\hline
		\textbf{Acronym} & \textbf{Full-form Meaning }\\
		\hline
		UWB-IR & ultra-wideband impulse radio \\
		ToF & time-of-flight \\
		LRP & low rate pulse repetition frequency \\
		HRP & high rate pulse repetition frequency \\
		PRF & pulse repetition frequency \\
		BPRF & base pulse repetition frequency \\
		HPRF & higher pulse repetition frequency \\
		PHY & physical layer \\
		SYNC & synchronization\\
		SFD & start-of-frame delimiter\\
		STS & scrambled timestamp sequence\\
		SF & spreading factor \\
		RMARKER & ranging marker\\
		TXD & transmit-device \\
		RXD & receive-device \\
		DS-TWR & double-sided two-way ranging \\
		CIR & channel impulse response \\
		\hline
	\end{tabular}
	\label{tab:Acronyms}
\end{table}

\section{UWB Standardization}\label{AppendixUWBStandard}
The Federal Communications Commission (FCC) in the United States authorized
the use of UWB for commercial purposes in 2002 \cite{FCC}. The first UWB
standard was proposed by the MultiBand OFDM Alliance (MBOA) in 2004 and
took a wideband orthogonal frequency-division multiplexing (OFDM) approach
\cite{MBOA}. Meanwhile, an IR-based proposal was put forward by the WiMedia
Alliance, which used a direct-sequence UWB (DS-UWB) approach \cite{DSUWB}.
IEEE 802.15.4 \cite{802.15.4} is a wireless networking standard that specifies
the physical layer (PHY) and medium access control (MAC) sublayers for a
low-rate wireless personal area network (LR-WPAN). In 2006, an amendment to
the standard, IEEE 802.15.4a, added a new PHY based on UWB-IR to provide
higher precision ranging and localization capability and higher aggregate
throughput.

In 2017, the IEEE 802.15 working group initiated efforts to define a series of
PHY enhancements to improve the capabilities of UWB devices. In 2018, the task
group 802.15.4z started to work on the corresponding amendment to IEEE 802.15.4.
This amendment, i.e., IEEE 802.15.4z \cite{802.15.4z}, was completed and
published in 2020. Amendment 4z enhanced the UWB PHYs and associated ranging
techniques. In particular, amendment 4z specified encrypted ranging waveforms
to increase the integrity and accuracy of the ranging measurements.
A more comprehensive overview of UWB standards and organizations is
available in \cite{2022UWBStdOverview}.

\section{STS Generation DRBG}\label{AppendixSTSDRBG}

\begin{figure}[h]
\centering
\epsfig{file=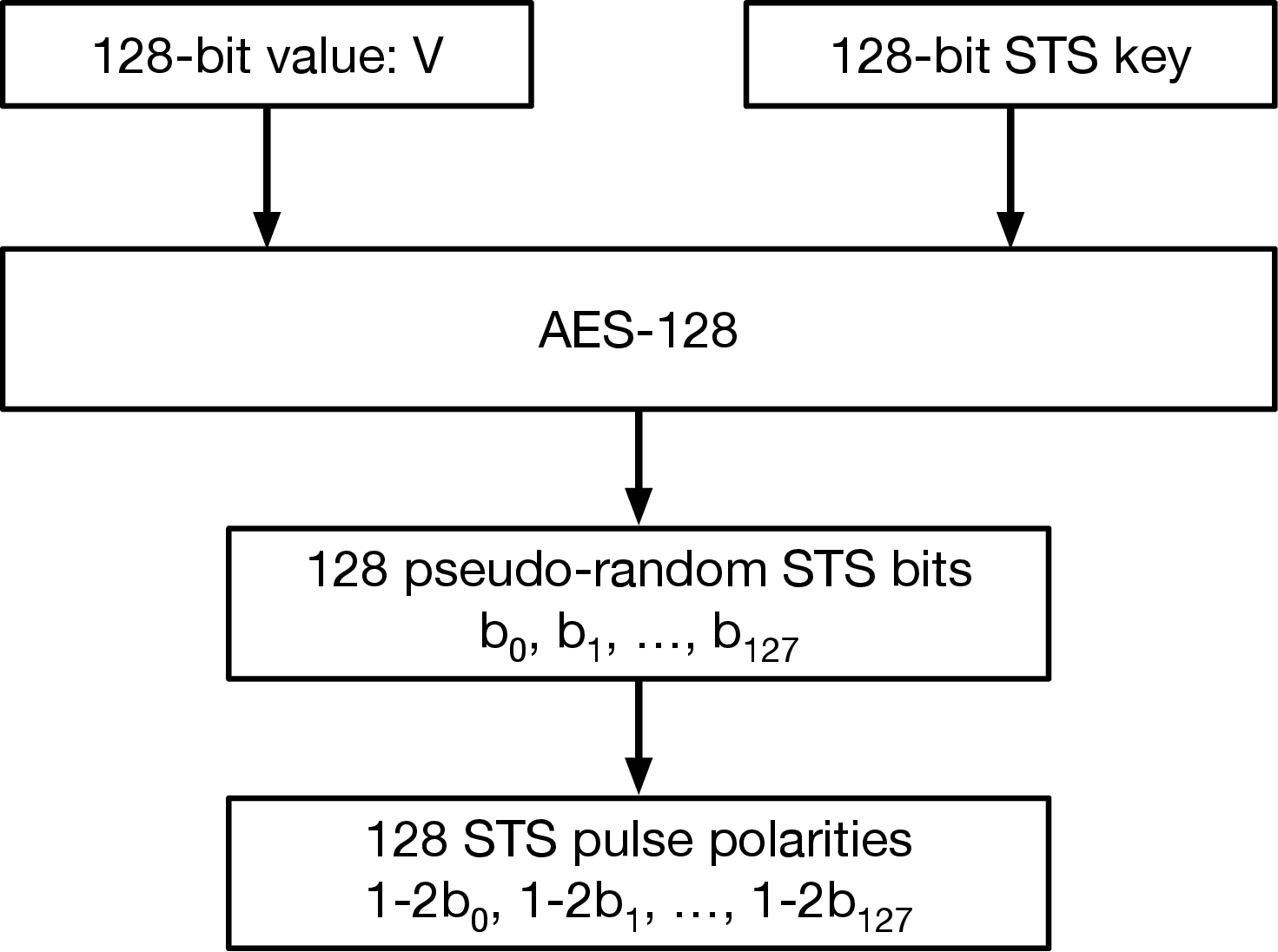, width=0.8\linewidth}
\caption{STS generation with AES-128 in counter mode.}
\label{fig:DRBG}
\end{figure}

\vspace{0mm}
The deterministic random bit generator (DRBG) \cite{DRBGstd} for generation
of STS bits is specified by IEEE in \cite{802.15.4z} and is depicted in
Fig. \ref{fig:DRBG}.
This DRBG produces $128$ pseudo-random STS bits after each run. The higher
layer is responsible for setting the $128$-bit STS key and the $128$-bit
initial value for $V$. The value of $V$ is incremented after each run of the
DRBG and outputting $128$ bits. In this way, fresh STS bits are generated
for each STS packet.

\vspace{0mm}
IEEE also recommends that re-seeding should be carried out when specific
levels of backtracking resistance are required or a large number of
iterations are performed.

\section{Proof of Proposition \ref{PropCIREst}}\label{AppendixCIREst}
\begin{proof}
We can re-write the received signal in (\ref{rxSig3}) as
\begin{eqnarray}
r[n] &=& \sum_{k=0}^{(Q-1)M} s_e[k]g[n-k] + w[n],\\
     &=& \sum_{k=0}^{JM-1} g[k]s_e[n-k] + w[n], \label{expandedSig}
\end{eqnarray}
where $s_e[k]$ is an expanded version of $s[k]$, i.e.,
$s_e[k] = s[k/M]$, $\forall k=0,\pm M, \pm 2M, ...$ and
$s_e[k]=0$ otherwise, and the second equality is due to the
commutative property of the linear convolution operation.

By utilizing the vector notation as defined in
Proposition \ref{PropCIREst}, the signal model
in (\ref{expandedSig}) can be further expressed as
\begin{equation}
{\bm r} = {\bm \Phi} {\bm g} + {\bm w}, \label{vecSig}
\end{equation}
where ${\bm w}:=[w[0], ...,w[(Q-1+J)M-1]]^T$.
The least-squares solution to the linear system in (\ref{vecSig})
can be obtained as in (\ref{slcCIR}), i.e., 
\begin{equation}
\hat{\bm g} 
= \left({\bm \Phi}^T{\bm \Phi}\right)^{-1}{\bm \Phi}^T{\bm r}.
\end{equation}
Note that the least-squares solution is indeed the maximum-likelihood
one when the noise vector $\bm w$ is white Gaussian.
\end{proof}

\section{CIR Thresholding}\label{AppendixCIRThr}
One CIR tap could be validated as a true one when the amplitude
of the tap exceeds a certain acceptance threshold with respect to
the learned standard deviation according to the desired false
acceptance rate. In Fig. \ref{fig:CIRThreshold}, the
threshold is set to $3\sigma$ to realize a false acceptance rate
around $0.1\%$ under the Gaussian distribution.
On the other hand, fake peaks can show up earlier than the
true first path in the presence of attacks.

\begin{figure}[h]
	\centering
	\epsfig{file=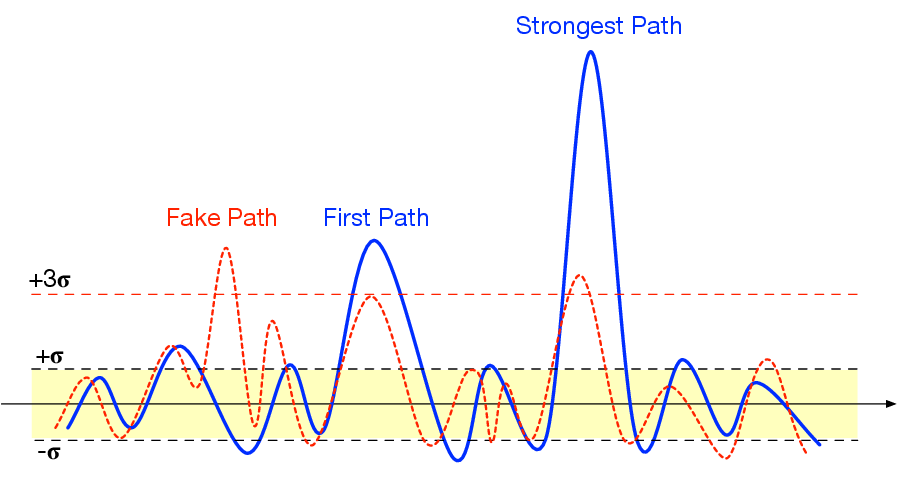, width=0.85\linewidth}
	\caption{Solid \textcolor{blue}{\bf Blue}: estimated
		CIR in the absence of adversarial attacks.
		Dotted \textcolor{red}{\bf Red}: CIR with fake peaks in the presence
		of attacks as in \cite{2022GhostPeak,Capkun2021_HRP}. The standard
	    deviation of the estimation errors is $\sigma$.}
	\label{fig:CIRThreshold}
\end{figure}

\section{Proof of Proposition \ref{PropAdapAttk}}\label{AppendixAdapAttk}
\begin{proof}
Among all the received signal samples in (\ref{rxSigAttk2}),
we can focus on the following set of signal samples that
convey information about the tap
$l_* = \bar{\tau}_0 + \epsilon - \Lambda M$:
\begin{equation}\label{yn_attk}
\begin{split}
y[n] := r_a[l_*+nM] 
= \theta \sum_{k} p_R[k-\bar{\tau}_0]a_e[l_*+nM-k],
\end{split}
\end{equation}
where $a_e[k]$ the expanded version of $a[k]$, i.e., $a_e[k]=a[k/M]$
when $k$ is a multiple of $M$ and $a_e[k]=0$ otherwise. Considering
$\epsilon\in[0, M-1]$, we can further simplify $y[n]$ in
(\ref{yn_attk}) as 
\begin{eqnarray}
y[n] &=& \theta \hspace{-4mm}\sum_{{k},{k=l_*+ jM}}
\hspace{-4mm} p_R[k-\bar{\tau}_0]a_e[l_*+nM-k]\nonumber\\ 
&=&\theta\sum_{j} p_R[l_*+jM-\bar{\tau}_0]a_e[(n-j)M]   \nonumber\\
&=& \theta \sum_{j} p_R[(j-\Lambda)M+\epsilon] a[n-j] \nonumber\\
&=& \theta \cdot p_R[\epsilon]\cdot a[n-\Lambda], \label{selSignal}
\end{eqnarray}
where we have utilized the assumption that $p_R(t)$ is confined within the
time interval of $[0, MT_0)$ to get the last equality.

With the observations $\{y[n]\}$ in (\ref{selSignal}), the RXD first
obtains an initial estimate of the CIR tap at $l_*+\zeta M$,
$\zeta\ge 0$, according to (\ref{estCIRattk1}) as
\begin{equation}\label{initEst}
	\widetilde{\sf CIR}_{l_*+\zeta M} = 
	\frac{1}{Q}\sum_{n=0}^{Q-1}s[n-\zeta]\cdot y[n].
\end{equation}
Note that $s[n]=0$ when $n<0$.
In particular, after substituting $y[n]$ with (\ref{selSignal}) and
following the definition of attack in (\ref{attkSymb}), the
estimate of the CIR tap at $l_*+(\Lambda+\lambda) M$ can be found in
(\ref{initEst2}).

\begin{figure*}[htp!]
\begin{equation}
\label{initEst2}
\begin{aligned}
&\widetilde{\sf CIR}_{l_*+(\Lambda+\lambda) M} \\
&=\frac{\theta \cdot p_R[\epsilon]}{Q}
\sum_{n=0}^{Q-1}s[n-\Lambda-\lambda]\cdot a[n-\Lambda]
= \frac{\theta \cdot p_R[\epsilon]}{Q}\sum_{n=0}^{Q-1}s[n-\Lambda-\lambda]\cdot
\left(
\sum_{\lambda'=0}^{H}s[n-\Lambda-\lambda']
\sum_{l=\Lambda+H}^{n-\Lambda}s[l]s[l-\Lambda-\lambda']
\right)\\
&= \frac{\theta \cdot p_R[\epsilon]}{Q} 
\sum_{\lambda'=0}^{H}\sum_{n=0}^{Q-1}\sum_{l=\Lambda+H}^{n-\Lambda}
s[n-\Lambda-\lambda]s[n-\Lambda-\lambda']\cdot s[l]s[l-\Lambda-\lambda']
\end{aligned}
\end{equation}
\hrulefill
\end{figure*}

To refine the CIR estimate in (\ref{initEst2}), interference cancellation
can be performed as in (\ref{estCIRattk2}). As the first-order approximation
to the matrix inversion in (\ref{estCIRattk2}), we have
\begin{equation}
\left({\bm \Phi}^T{\bm \Phi}/Q\right)^{-1} = 
\left({\bm I}_{JM} + {\bm \Delta} \right)^{-1}
\approx {\bm I}_{JM} - {\bm \Delta},
\end{equation}
where ${\bm I}_{JM}$ denotes the $JM\times JM$ identity matrix and
$\bm \Delta$ captures the off-diagonal autocorrelation of the STS
sequence. Specifically, we have ${\bm \Delta}(i,i)=0$ and
${\bm \Delta}(i,i+j) = \sum_k s_e[k]s_e[k-j]/Q$ when $j\ne 0$,
where $s_e[k]$ is defined as in (\ref{expandedSig}).
According to the above approximation, we can cancel the interference
to CIR tap at $l_*$ as
\begin{equation}\label{refinedEst}
\begin{split}
&\widehat{\sf CIR}_{l_*} = \frac{1}{Q}\cdot \\
&~~~\sum_{n=0}^{Q-1}s[n]\cdot
\left(
y[n]- \sum_{\lambda\ne -\Lambda}
s[n-\Lambda-\lambda]\cdot \widetilde{\sf CIR}_{l_*+(\Lambda+\lambda) M} 
\right)\\
&= \widetilde{\sf CIR}_{l_*} - \sum_{\lambda\ne -\Lambda}\sum_{n=0}^{Q-1}
\frac{\widetilde{\sf CIR}_{l_*+(\Lambda+\lambda)M}}{Q}
 s[n]s[n-\Lambda-\lambda].
\end{split}
\end{equation}

By utilizing the IID property of the STS sequence, we can easily
verify ${\sf E}[\widetilde{\sf CIR}_{l_*}] = 0$ and further
obtain the result in (\ref{meanCIR1}), where we have used the fact
that the expected value is zero whenever $\lambda'\neq \lambda$
to get the second equality. 

\begin{figure*}[htp!]
\begin{equation}
\label{meanCIR1}
\begin{aligned}
{\sf E}\left[\widehat{\sf CIR}_{l_*}\right] 
&=
-\frac{\theta \cdot p_R[\epsilon]}{Q^2}\cdot\sum_{\lambda\ne -\Lambda}
{\sf E}\left[
\sum_{n=0}^{Q-1} s[n]s[n-\Lambda-\lambda]
\sum_{\lambda'=0}^{H}
\sum_{n'=0}^{Q-1}
\sum_{l=\Lambda+H}^{n'-\Lambda}
s[n'-\Lambda-\lambda]s[n'-\Lambda-\lambda']
s[l]s[l-\Lambda-\lambda']
\right] \\
&= 
-\frac{\theta \cdot p_R[\epsilon]}{Q^2}\cdot\sum_{\lambda=0}^{H}
{\sf E}\left[
\sum_{n=0}^{Q-1} s[n]s[n-\Lambda-\lambda]
\sum_{n'=0}^{Q-1}
\sum_{l=\Lambda+H}^{n'-\Lambda}
s[l]s[l-\Lambda-\lambda]
\right] \\
&= 
-\frac{\theta \cdot p_R[\epsilon]}{Q^2}\cdot\sum_{\lambda=0}^{H}
{\sf E}\left[
\sum_{n'=0}^{Q-1}
\sum_{l=\Lambda+H}^{n'-\Lambda}
s[l]s[l-\Lambda-\lambda]
\sum_{n=0}^{Q-1} s[n]s[n-\Lambda-\lambda]
\right]
\end{aligned}
\end{equation}
\hrulefill
\end{figure*}

By resorting to the IID property of the STS sequence again,
we have
\begin{equation}\label{meanCIR2}
{\sf E}\left[\widehat{\sf CIR}_{l_*}\right] 
= -\frac{\theta \cdot p_R[\epsilon]}{Q^2}\cdot\sum_{\lambda=0}^{H}
\sum_{n'=2\Lambda+H}^{Q-1}(n'-2\Lambda-H + 1).
\end{equation}
Accordingly, we can obtain the following asymptotic result:
\begin{equation}
\lim_{Q\rightarrow\infty} {\sf E}\left[\widehat{\sf CIR}_{l_*}\right]
=-\frac{H+1}{2}\theta \cdot p_R[\epsilon].
\end{equation}
\end{proof}

\section{Azuma-Hoeffding Inequality}\label{AppendixHoeffding}
The Azuma-Hoeffding inequality and the proof can be found in
\cite{1963Hoeffding,RandomProcbook}. We summarize this important
inequality here for convenient reference.

{\bf Azuma-Hoeffding Inequality}: 
{\it Let $\{X_0, X_1, ...\}$ be a martingale, and suppose that there
exists two sequence of real numbers $\{A_0, A_1, ...\}$ and
$\{B_0, B_1, ...\}$ such that
${\sf Pr}(A_t\le X_t-X_{t-1}\le B_t) = 1$, $\forall t$. 
Then we have, $\forall \epsilon>0$,
\begin{equation}
{\sf Pr}(X_n-X_0 \ge \epsilon) \le
\exp
\left(
-\frac{2\epsilon^2}{\sum_{t=1}^n (B_t - A_t)^2}
\right).	
\end{equation}
}

\section{Sampling Clock at RXD}\label{AppendixRXDSampling}
At shown in Fig. \ref{fig:sysModel}, the RXD needs to carry out
timing and frequency synchronization to the received signal before
processing it further. When there is only the legitimate signal
from the TXD in the air, the RXD will acquire the timing and
frequency of the TXD. On the other hand, the acquired timing and
frequency at the RXD will sync with the attack signal when the
transmission from an adversary dominates.

\vspace{4mm}
In the case of attack, the RXD will sample and process the received
signal according to the the clock rate of the attacker upon acquiring
timing and frequency. Specifically, the RXD will utilize $\tilde{T}_c$,
the chip duration from the perspective of the adversary, to
determine the sampling period in (\ref{rxSig3}), i.e.,
$T_0 = \tilde{T}_c/\Omega$. Note that we have $\tilde{T}_c > T_c$
when the attacker runs a slow clock as described in Section
\ref{sSec:ClockOffsetAttk}.

\vspace{4mm}
We have looked into the scenario where the clock at the attacker
ticks at the same rate as the TXD clock in Section \ref{sSec_Security}.
In this case, the RXD samples the received attack signal
with the clock rate matching that of the TXD. We have further
examined how an attacker could take advantage of the clock offsets
relative to the TXD in Section \ref{sSec:ClockOffsetAttk}.

\newpage
\section{Proof of Proposition \ref{PropDetperf}}\label{AppendixDetPerf}
\begin{proof}
	Let $G_n := \sum_{k=0}^{n-1} ({\sf E}[x[k]s[k]]-x[k]s[k])$. We have
	\begin{equation}
		\begin{split}
			&{\sf E}[G_{n+1}|G_n,G_{n-1}, ..., G_0] \\
			&= G_n + {\sf E}\big[{\sf E}[x[n]s[n]]-x[n]s[n] \big| G_n,G_{n-1}, ..., G_0\big]\\
			&= G_n + {\sf E}[x[n]s[n]] - {\sf E}\big[x[n]s[n] \big| G_n,G_{n-1}, ..., G_0\big]\\ 
			&= G_n + {\sf E}[{\mathbb M}(s[n])s[n]] + {\sf E}[e[n]s[n]]\\
			&\hspace{+9.5mm}-{\sf E}\big[{\mathbb M}(s[n])s[n] \big| G_n,G_{n-1}, ..., G_0\big]\\ 
			&\hspace{+9.5mm}-{\sf E}\big[e[n]s[n] \big| G_n,G_{n-1}, ..., G_0\big]\\ 
			&\stackrel{(a)}{=} G_n - {\sf E}\big[e[n]s[n] \big| G_n,G_{n-1}, ..., G_0\big] \stackrel{(b)}{=} G_n
		\end{split},
	\end{equation}
	where equality (a) utilizes the result in (\ref{uncorrelatedError}) and
	also the fact that $s[n]$ is independent of $\{x[k], s[k]\}_{k<n}$ when
	the tap $l_*$ corresponds to the true first path;
	equality (b) is due to the assumption that the estimation error
	$e[n]$ is uncorrelated with $s[n]$ conditioned on $\{x[k]\}_{k<n}$ and
	$\{s[k]\}_{k<n}$. Further, we have
	\begin{equation}
		A_n \le G_n-G_{n-1} \le B_n,
	\end{equation}
	where 
	\begin{eqnarray}
		A_n = {\sf E}[{\mathbb M}(s[n-1])s[n-1]]-1;   \\
		B_n = {\sf E}[{\mathbb M}(s[n-1])s[n-1]]+1;
	\end{eqnarray}
	By resorting to the Azuma-Hoeffding inequality in Appendix
	\ref{AppendixHoeffding} again, $\forall \epsilon>0$,
	we have
	\begin{equation}
		{\sf Pr}(G_Q \ge \epsilon) \le \exp\left(-\frac{\epsilon^2}{2Q}\right).
	\end{equation}
	The miss detection performance of the detector in (\ref{validation})
	is thus given by
	\begin{equation}
		\begin{split}
			&{\sf Pr}(T({\bm x}) < \gamma) \\ 
			&= {\sf Pr}\left(G_Q > \sum_{n=0}^{Q-1}
			{\sf E}[{\mathbb M}(s[n])s[n]]-\sqrt{Q}\cdot \gamma\right)\\
			&= {\sf Pr}\left(G_Q > Q\bar{C}-\sqrt{Q}\cdot \gamma\right)\\
			&\le \exp\left(-\frac{Q(\bar{C}-\gamma/\sqrt{Q})^2}{2}\right),
		\end{split}
	\end{equation}
	where $\bar{C}$ is as defined in Proposition \ref{PropDetperf}.
\end{proof}

\begin{figure}[t]
	\centering
	\epsfig{file=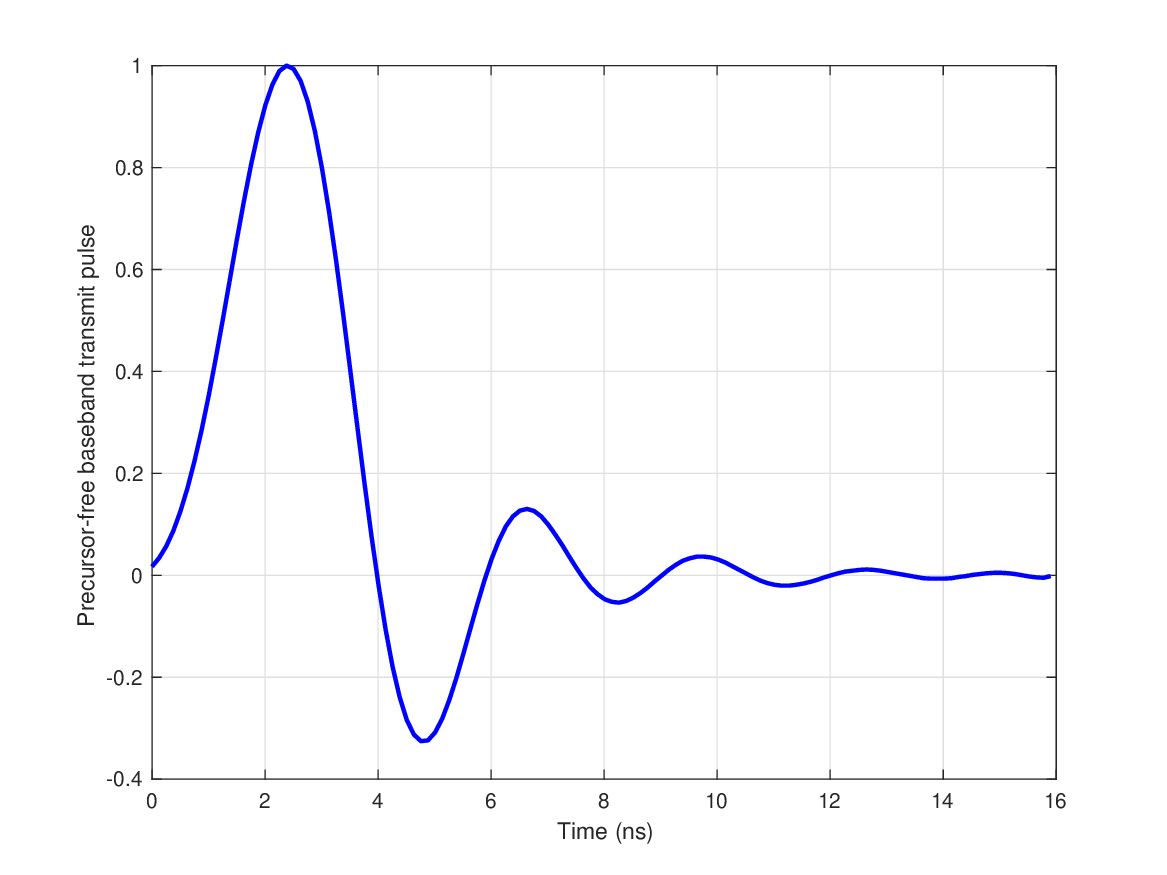, width=0.85\linewidth}
	\caption{Baseband pulse with minimum precursor energy.}
	\label{fig:minPhaseSRRC}
\end{figure}

\section{Precursor-Free Pulse}\label{AppendixPCF}
In IEEE 802.15.4 \cite{802.15.4}, a square-root raised cosine (SRRC) pulse is
defined as follows. 
\begin{equation}
	p_{T, {\sf srrc}}(t) = \frac{2}{\pi\sqrt{T_c}}
	\frac{\cos\left(\frac{3\pi t}{2T_c}\right)+\frac{\sin(0.5\pi t/T_c)}{2t/T_c}}
	{1-(2t/T_c)^2}, \label{srrc}
\end{equation}
where $T_c=1/499.2$ $\mu$s denotes the length of a chip. The minimum phase
decomposition \cite{DSPbook} of the SRRC pulse in (\ref{srrc}) is plotted in
Fig. \ref{fig:minPhaseSRRC}. Note that this minimum phase pulse exhibits maximum
energy concentration at the pulse beginning.


\end{document}